\theoremstyle{definition}
\newtheorem{definition}{Definition}
\newtheorem{example}{Example}
\theoremstyle{plain}
\newtheorem{assumption}{Assumption}
\newtheorem{theorem}{Theorem}
\newtheorem{lemma}{Lemma}
\newtheorem{corollary}{Corollary}
\theoremstyle{remark}
\newtheorem*{remark}{Remark}
\newcommand{\cS}{\mathcal{S}}
\newcommand{\I}{\mathcal{I}}
\newcommand{\J}{\mathcal{J}}
\newcommand{\N}{\mathcal{N}}
\newcommand{\bigO}{\mathcal{O}}
\newcommand{\R}{\mathcal{R}}
\newcommand{\dt}{\Delta t}
\newcommand{\exenbr}{t^{l-1}-\epsilon\leq T^{l-1}\leq t^{l-1}}
\newcommand{\tentnbr}{t^{l-1}-\epsilon\leq T^{l-1}_{k^{l-1}}\leq t^{l-1}}
\newcommand{\meandeg}{\langle k\rangle}
\newcommand{\traj}[1]{\overrightarrow{kt}^{#1}}
\newcommand{\TRAJ}[1]{\{K^1,T^1,K^2,T^2,...,K^l,T^l\}}
\newcommand{\cond}{\ensuremath{\;|\;}}
\newcommand{\Rec}{\ensuremath{\text{Rec}}}
\newcommand{\Inf}{\ensuremath{\text{Inf}}}
\newcommand{\Rewire}{\ensuremath{\text{Rewire}}}
\title{Simulation algorithms for Markovian and non-Markovian epidemics}
\author{Guohao Dou\\
	School of Computer and Communication Sciences\\
	EPFL\\
	1015 Lausanne \\
	\texttt{guohao.dou@epfl.ch} \\
}
\begin{document}
\maketitle

\begin{abstract}
	Researchers have employed stochastic simulations to determine the validity of their theoretical findings and to study analytically intractable spreading dynamics. In both cases, the correctness and efficiency of the simulation algorithm are of paramount importance. We prove in this article that the Next Reaction Method and the non-Markovian Gillespie algorithm, two algorithms for simulating non-Markovian epidemics, are statistically equivalent. We also study the performance and applicability under various circumstances through complexity analyses and numerical experiments. In our numerical simulations, we apply the Next Reaction Method and the Gillespie algorithm to epidemic simulations on time-varying networks and epidemic simulations with cooperative infections. Both tasks have only been done using the Gillespie algorithm, while we show that the Next Reaction Method is a good alternative. We believe this article may also serve as a guide for choosing simulation algorithms that are both correct and efficient for researchers from epidemiology and beyond. 
\end{abstract}

\keywords{Gillespie Algorithm \and Epidemic Simulation}

\section{Introduction}
Simulations have become a highly relevant part of high-stakes policymaking, especially in the context of disease control and prevention. In 2020, a team from Imperial College published a report\cite{ferguson2020impact} simulating the spread of COVID-19 with different levels of government intervention. The report predicted dire consequences if no government action were taken, convincing the UK government to issue a series of new restrictions. 

We believe that our understanding of epidemic simulations should keep up with their ever-growing political relevance and social impact. As pointed out in \cite{kiss2017mathematics}, from classical models assuming well-mixed populations (e.g. \cite{kermack1927contribution}) to sophisticated agent-based models (e.g. \cite{ferguson2020impact}), we see an increase in both model realism and model complexity, where the former is desirable while the latter is not. Like the authors of \cite{kiss2017mathematics}, we also believe that epidemic simulations on networks sit in the middle of the complexity spectrum, allowing us to maintain the balance between mathematical rigor and the ease of incorporating real-world data. 

A few technical challenges are encountered while making epidemic simulations on networks more realistic. Firstly, researchers tend to assume Poissonian transitions between compartments, while epidemiologists have found that recovery rates can increase over time, as opposed to being constant, as suggested by exponentially distributed recovery times \cite{Lloyd2001}. As for the ongoing COVID-19 pandemic, the authors of \cite{lu2020epidemiological} find that the best-fitting model for the incubation period based on all 1158 patients is the Weibull distribution, with the Gamma distribution being a close second. Secondly, networks on which epidemic processes take place are usually assumed to be static, which is reasonable when the network evolves on a time scale much slower than the epidemic, but is clearly unjustified for faster network evolutions such as commuting. This assumption also under-utilizes data from digital tracking\cite{salathe2012digital}\cite{ferretti2020quantifying}.

In search of a disciplined way of introducing realism to epidemic simulations, we survey computational chemistry for inspiration. In an environment where different types of molecules coexist, a set of coupled chemical reactions occurs at different rates and the times to their occurrence follow exponential distributions. For each iteration, both the reaction type and the time of reaction are identified. The chosen reaction will then be executed and the molecule count will be updated. 

In his 1976 paper\cite{gillespie1976general}, Gillespie proposes the ``First Reaction Method'' which, in each iteration, generates a tentative time for each chemical reaction and then locates the most immediate reaction to occur. The downside of the First Reaction Method is that it needs to generate as many random numbers per iteration as there are reactions. To fix this inefficiency, Gibson et al. propose Next Reaction Method~(NRM)\cite{Gibson2000}, which maintains a priority queue of tentative times and ensures that random numbers are rescaled and reused whenever possible, reducing the number of random numbers generated to one per iteration. NRM can be readily extended to the non-Markovian setting, since even though rescaling no longer works, we can still reuse generated tentative times in the priority queue. 

In the same 1976 paper, Gillespie also proposes the ``Direct Method'' which generates the reaction time and reaction type directly without sampling tentative times. This method makes extensive use of the property that the minimum of exponentially distributed random variables also follows an exponential distribution, whose rate is the sum of all individual rates. 

Considerable efforts have gone into extending the Direct Method to more general settings. Boguñá et al.\cite{Boguna2014} propose the non-Markovian Gillespie Algorithm~(nMGA) which works with non-Poissonian transitions, allowing tentative times on individual ``reactions'' to follow arbitrary distributions. In 2015, Vestergaard et al.\cite{Vestergaard2015} extend the method to simulating epidemic processes on time-varying networks, by assuming that network changes only happen at given, fixed time intervals. The authors also venture into the general non-Markovian case, even though they end up resorting to approximation. For the rest of this article, we refer to methods akin to the Direct Method as ``Gillespie algorithms'' to stay in line with the nomenclature in the literature, noting their defining feature that no realization of tentative times needs to be generated to compute their minimum. 

With all interevent times assumed to be exponentially distributed, the equivalence between the First Reaction Method and the Direct Method is proved in \cite{gillespie1976general} and the equivalence between the Direct Method and the Next Reaction Method is proved in \cite{Gibson2000}. In the general non-Markovian case, we show in this article that, just as NRM and Direct Method are equivalent in the Markovian case, their non-Markovian extensions are also equivalent. 

\section{Event Emitters}
In this section, we define event emitters and give examples of event emitters for both an SIS and an SIR epidemic on an undirected simple graph.
\begin{definition}{(Event emitters)}\label{def:ee}
  Let $\I$ be the space of identifiers and $\mathcal{P}(\I)$ its power set. $\cS$ is the state space of the system. An event emitter $e$ is defined by the tuple $(\mbox{ID}, \Psi, f, C, R)$ where
  \begin{itemize}
      \item $\mbox{ID}\in\I$ is the ID of the event emitter $e$.
      \item $\Psi$ is the cumulative distribution function (CDF) specifying the interevent distribution, with $\Psi(t)=0, \forall\; t\leq 0$.
      \item $f:\cS\to\cS$ is the state transition function.
      \item $C: \cS\to\mathcal{P}(\I)$ maps a state to the set of IDs of event emitters that will be created upon execution.
      \item $R: \cS\to\mathcal{P}(\I)$ maps a state to the set of IDs that will be removed upon execution, with the additional requirement that $\forall s\in\cS, \mbox{ID}\in R(s)$, i.e., event emitters are self-removing.
  \end{itemize}
\end{definition}
$\Psi$ specifies when an event is going to be triggered, and $f, C$ and $R$ specify what is going to happen if this event emitter is chosen for execution. $C$ and $R$ are functions of the state instead of subsets because for non-trivial applications, we need the state of the system to compute what to create or remove. Also, we require all event emitters to delete themselves upon execution by ensuring $\forall s\in\cS, \mbox{ID}\in R(s)$. 

During implementation, some integrity checking is recommended. For example, if $E$ is the set of event emitters in the system, we may want to ensure that $\forall s\in\cS, R(s)\subseteq E$ and $C(s)\cap E=\emptyset$, i.e., we disallow the removal of nonexistent event emitters and the creation of duplicates. 

This way of orchestrating simulation tasks improves code modularity by encapsulating the specifics inside well-defined event emitters. Also, event emitters are designed to be ``immutable'', that is, their configurations are not meant to be directly modified and any modification must be achieved through removal and subsequent renewal.

\begin{example}
  We define event emitters for an SIS epidemic on a network. Inf($x\to y$) stands for a pending infection from node $x$ to node $y$ along the edge $(x,y)$, while Rec($x$) stands for a pending recovery of node $x$. The interevent distribution $\Psi$ is intentionally omitted in Table~\ref{tableSIS}, since it is usually specified by the particular natural history of the epidemic. We do the same for an SIR epidemic on a network, as shown in Table~\ref{tableSIR}. Note that in an SIR model, once a node recovers, it is no longer threatened by its infected neighbors. The set of event emitters newly created by Rec($x$) is thus $\emptyset$. This design makes it clear that we assume all per-edge infections and all recoveries to be mutually independent. 
\end{example}

\begin{table}
  \centering
  \caption{\textbf{Event emitters for an SIS epidemic on a network.}}
  \begin{tabular}{|c|c|c|}
  \toprule
             & Infection                          & Recovery                      \\ \hline
  ID         & Inf($x\to y$)                      & Rec($x$)                      \\ \hline
  f          & $y$.state $\gets$ I (Infected)         & $x$.state $\gets$ S (Susceptible) \\ \hline
  C          & \makecell{$\{\text{Rec}(y)\}$\\$\cup$\\$\{\text{Inf}(y\to z) \;|\; z\in \N(y), z.\text{state=S}\}$} & \makecell{$\{\text{Inf}(z\to x)\;|\; z\in \N(x), z\text{.state=I}\}$} \\ \hline
  R          & \makecell{$\{\text{Inf}(z\to y)\;|\; z\in \N(y), z\text{.state=I}\}$} & \makecell{$\{\text{Rec}(x)\}$ \\ $\cup$ \\$\{\text{Inf}(x\to z)\;|\; z\in \N(x),z\text{.state=S}\}$} \\ 
  \bottomrule
  \end{tabular}
  \begin{flushleft} 
    Table notes: $\{\text{Inf}(y\to z) \;|\; z\in \N(y), z.\text{state=S}\}$ means ``Inf($y\to z$) for all susceptible neighbors of $y$''; we also use a state reassignment instead of a state transition function for brevity.
  \end{flushleft}
  \label{tableSIS}
\end{table}

\begin{table}
  \centering
  \caption{\textbf{Event emitters for an SIR epidemic on a network.}}
  \begin{tabular}{|c|c|c|}
  \toprule
             & Infection                          & Recovery                      \\ \hline
  ID         & Inf($x\to y$)                      & Rec($x$)                      \\ \hline
  f          & $y$.state $\gets$ I         & $x$.state $\gets$ R (Recovered) \\ \hline
  C          & \makecell{$\{\text{Rec}(y)\}$\\$\cup$\\$\{\text{Inf}(y\to z) \;|\; z\in \N(y), z.\text{state=S}\}$} & \makecell{$\emptyset$} \\ \hline
  R          & \makecell{$\{\text{Inf}(z\to y)\;|\; z\in \N(y), z\text{.state=I}\}$} & \makecell{$\{\text{Rec}(x)\}$ \\ $\cup$ \\$\{\text{Inf}(x\to z)\;|\; z\in \N(x),z\text{.state=S}\}$} \\ 
  \bottomrule
  \end{tabular}
  \label{tableSIR}
\end{table}

\section{Next Reaction Method}
Gibson et al. propose the Next Reaction Method (NRM)\cite{Gibson2000} as an improvement upon the First Reaction Method proposed in \cite{gillespie1976general}. The core idea of the improvement is that previously generated interevent times are reused. In the First Reaction Method, a tentative time is generated for each event emitter after the execution of any event in the system. Gibson et al. show that the vast majority of tentative times can be reused via careful rescaling without sacrificing statistical exactness, when interevent times follow exponential distributions and the set of event emitters is fixed. 

Adapting NRM to epidemic simulations on networks is not completely straightforward. First and foremost, while coupled chemical reactions have an immutable set of reactants and reactions with mutable Poisson rates, simulation tasks in epidemiology have a mutable set of pending events with pre-defined, immutable interevent distributions. The mutability of pending events calls for more bookkeeping in algorithm design, while the immutability of interevent distributions offers optimization opportunities. Secondly, it is not clear whether the highly efficient NRM can be applied to the more general non-Markovian case. The original NRM paper\cite{Gibson2000} states that ``In general, Monte Carlo simulations assume statistically independent random numbers, so it is usually not legitimate to re-use random numbers. In this particular special case, we shall prove that it is legitimate.'' However, the proof in \cite{Gibson2000} is done assuming exponential distributions. 

Important clarification of terms needs to be made: what we refer as non-Markovian epidemics are conceptually closer to what the authors of \cite{Gibson2000} refer to as ``time-varying Markov processes'' in section 4 of \cite{Gibson2000}, with a focus on non-exponential interevent distributions. The authors of \cite{Gibson2000} prove the equivalence of NRM to Gillespie's Direct Method when all interevent times are assumed to be exponential. In this article, we prove the equivalence between NRM and nMGA\cite{Boguna2014} while making no assumption on interevent distributions. 

In this section, we first reformulate NRM using the notation of event emitters. We then prove some useful properties of NRM that will aid in the proof of the statistical equivalence between NRM and nMGA later. Lastly, we discuss some implementation details and analyze their complexity. 

First and foremost, we define some quantities needed to describe the algorithm. 
\begin{definition}(Tentative time of execution) We denote the tentative time of execution of emitter $i$ in iteration $l$ as $T^l_i$. It is the time when $e^l_i$ will execute if no other event emitter removes it first. 
\end{definition}

\begin{definition}(Time and location of the $l$-th execution)
  $T^l$ is the time of the $l$-th execution and $K^l$ is the event emitter that triggers the $l$-th execution. Only the event emitter with the most immediate event may execute, meaning that 
  \[K^l \coloneqq \mathrm{argmin}_i T^l_i,\quad T^l \coloneqq \min_i T^l_i.\]
\end{definition}

For the rest of this article, we adopt the indexing convention where superscripts represent iteration indices, and subscripts represent emitter IDs. For example, $T^l$ reads ``the time of the $l$-th execution'' and $\Psi_i$ reads ``the interevent distribution of event emitter $i$.'' 

We describe NRM in Algorithm~\ref{alg:nrm}.

\begin{algorithm}
  \caption{NRM}\label{alg:nrm}
  \begin{enumerate}
    \item (Initialization Phase)
    \begin{itemize}
        \item $E^1=\{e^1_1 ... e^1_{n^1}\}$ is the initial set of event emitter IDs
        \item The system is at initial state $s\in\cS$
        \item Set the clock $T^0 \gets 0$
        \item Set iteration index $l\gets 1$
        \item Generate $T_i^1\sim\Psi_i,\; i=1,2,...,n^1$
    \end{itemize}
    \item (Main Phase)
    \begin{enumerate}
        \item The algorithm terminates if $|E^l| = 0$
        \item Find argmin by $K^l\gets \text{argmin}_i T^l_i$
        \item Update the clock by $T^l\gets \min_i T^l_i$
        \item Update emitters by $E^{l+1}\gets (E^l\backslash R_{K^l}(s))\cup C_{K^l}(s)$
        \item For each $j\in C_{K^l}(s)$, generate interevent time by $\tilde{T}^{l+1}_{j}\sim \Psi_j$, and then convert it to time of execution by $T^{l+1}_j \gets \tilde{T}^{l+1}_j + T^l$
        \item Apply state transition by $s\gets f_{K^l}(s)$
        \item Increment iteration index by $l\gets l+1$
        \item Go back to step (a)
    \end{enumerate}
  \end{enumerate}
\end{algorithm}

We extract the following update rule from Algorithm~\ref{alg:nrm}.
\begin{align*}
  T^{l+1}_i = \begin{cases}
    \tilde{T}^{l+1}_i + T^l  \;\;\; &\text{ if } i\in C_{K^l} (\text{``freshly generated''})\\
    T^l_i  \;\;\; &\text{ if } i\notin C_{K^l} (\text{``reused''})
  \end{cases}
\end{align*}
where $\tilde{T}^{l+1}_i\sim\Psi_i$. The notion of ``freshly generated'' and ``reused'' is self-explanatory, since only event emitters in $C_{K^l}$ require random number generation (RNG). In fact, for each iteration, tentative times are either freshly generated or reused, forming a partition. For the rest of the article, we denote the set of emitter IDs with reused tentative times in iteration $l$ as $\R^l$. By definition, $\R^l\coloneqq E^l\setminus C_{K^{l-1}}$.

Having recognized that the outcome of an algorithm is completely determined by the min and argmin generated in each iteration, we define the trajectory up to iteration $l-1$ as a sequence of realizations.
\begin{definition}(trajectory up to $l-1$)\label{def:traj}
  The trajectory of an algorithm up to iteration $l-1$ is denoted as
  \begin{align*}
    \{\traj{l-1}\} \coloneqq \{K^1=k^1, T^1=t^1,...,K^{l-1}=k^{l-1}, T^{l-1}=t^{l-1}\}.
  \end{align*}
The sequence $\{t^0, t^1, ..., t^l, ...\}$ is a point process on the real half-line, and only at these points can event emitters be created or removed. 
\end{definition}

We now look into the distribution of $T^l_i$. We first note that the distribution of $T^l_i$ is always understood as being conditioned on the trajectory up to $l-1$. When we write down the distribution of $T^l_i$, what we really mean is
\begin{align*}
  \Pr\{T^l_i\leq t\cond \traj{l-1}\}.
\end{align*}
Once a tentative time is generated, it is never modified except for the possibility of getting removed by the execution of some other event emitters. However, since only the minimum and argmin are recorded in the trajectory, we can view $\Psi_i$ as prior distributions and the process of locating the minimum as updating priors in order to obtain posterior distributions. When we speak of the distribution of $T^l_i$, we refer to the posterior distribution conditioned on the trajectory up to $l-1$. 

Knowing $\traj{l-1}$, we can recover the set of event emitters $E^l$ and its reused partition $\R^l$.

\begin{example}
  To make Algorithm~\ref{alg:nrm} less abstract, we go through the first three iterations of a minimal example illustrated in Fig.~\ref{figNRMExample}. The behavior of event emitters is defined in Table~\ref{tableSIS}. For clarity, we use integers instead of full emitter IDs for subscripts. 

  We observe that some tentative times are freshly generated, such as $T^1_1, T^1_2, T^1_3, T^2_3, $ $T^2_4, T^3_5$, while others are reused, such as $T^2_1, T^2_2, T^3_3, T^3_4$. Therefore, $\R^1=\emptyset$, $\R^2=\{1, 2\}$, $\R^3=\{3, 4\}$.
\end{example}

\begin{figure}[!h]
  \centering
  \includegraphics[scale=0.13]{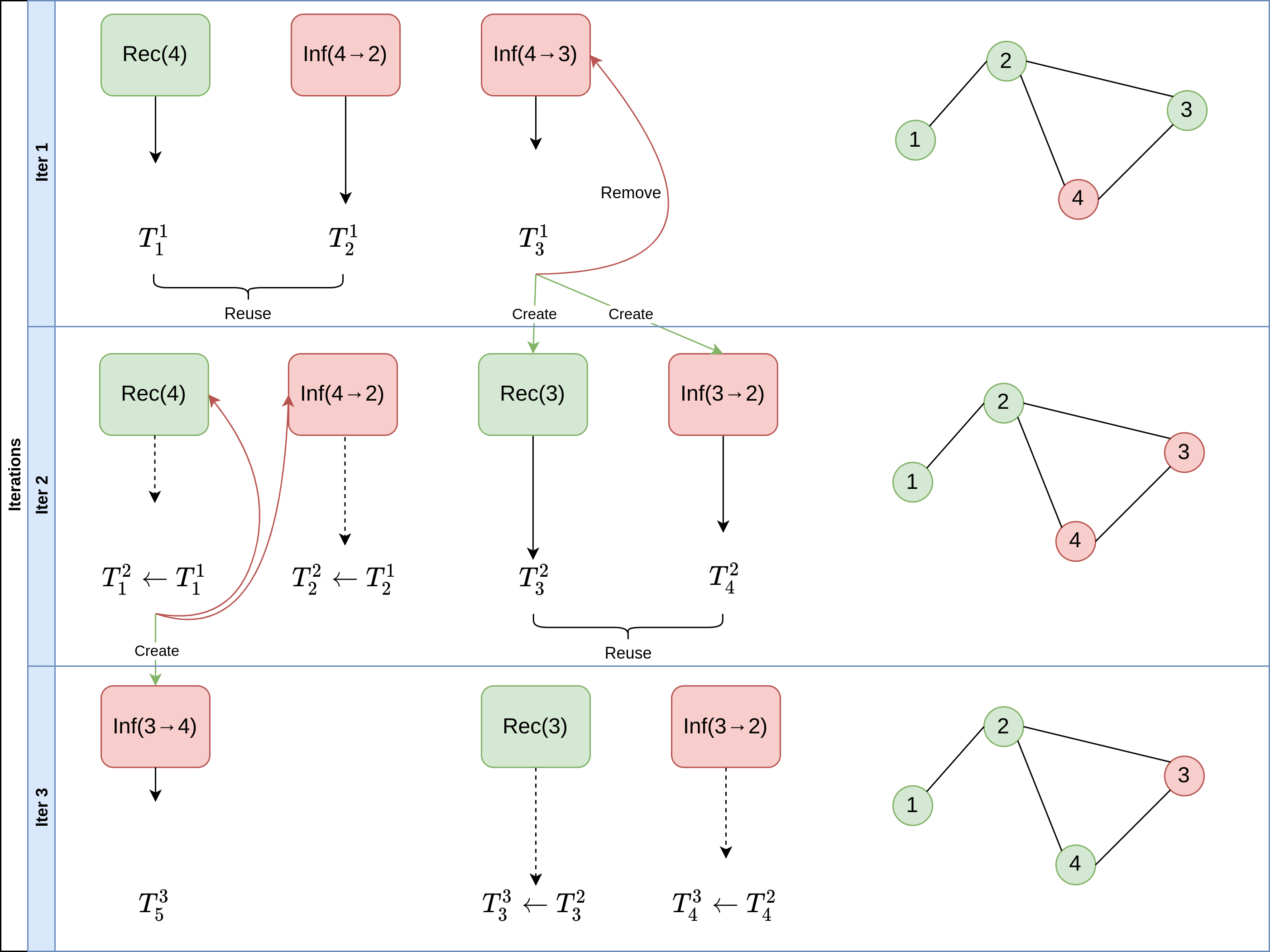}
  \caption{{\bf NRM Example: an SIS epidemic on a small network.} The left side of the figure depicts the evolution of event emitters, while the right side keeps track of the state of the network. Each box represents an event emitter, with a downward-pointing arrow whose length is the tentative time of execution. As we see in the first iteration, Inf(4$\to$3) manages to draw the shortest straw, leading to the infection of node 3, the removal of Inf(4$\to$3), as well as the creation of Rec(3) and Inf(3$\to$2). As for the second iteration, the first two dashed arrows reflect the fact that no additional RNG is required for these two as the numbers from the previous iteration are reused. Meanwhile, the two newly created event emitters, Rec(3) and Inf(3$\to$2), do require RNG, as indicated by the solid arrows.}
  \label{figNRMExample}
\end{figure}

Having described NRM, we proceed to demonstrate some of its properties that will prove crucial in establishing the statistical equivalence between NRM and nMGA. We start by making an assumption on the independence of freshly generated tentative times. 
\begin{assumption}\label{assumption:assumption}
  Let $\{T^l_i\}_{i\in \mathcal{J}} \subseteq \{T^l_i\}_{i\in E^l}$ be an arbitrary subset of tentative times in iteration $l$ and $T^l_j, j\in\mathcal{J}$ a freshly generated tentative time. The joint cumulative distribution function can be factorized as follows,
  \begin{align*}
    \Pr\{\forall i\in \mathcal{J}, T^l_i\leq t_i \cond\traj{l-1} \} = \Pr\{T^l_j\leq t_j \cond\traj{l-1}\}\cdot \Pr\{\forall i\in \mathcal{J}\backslash\{j\}, T^l_i\leq t_i\cond\traj{l-1}\}.
  \end{align*}
\end{assumption}
 
\begin{remark}
  Note that Assumption~\ref{assumption:assumption} is weaker than mutual independence. For example, in Fig.~\ref{figNRMExample}, even though we can safely do
  \[\Pr\{T^2_3\leq t_3, T^2_4\leq t_4\cond\traj{1}\} = \Pr\{T^2_3\leq t_3\cond\traj{1}\} \Pr\{T^2_4\leq t_4\cond\traj{1}\},\]
  we cannot factorize $\Pr\{T^2_1\leq t_1, T^2_2\leq t_2 \cond\traj{1}\}$ based on Assumption~\ref{assumption:assumption} alone since both $T^2_1\gets T^1_1$ and $T^2_2\gets T^1_2$ are reused and \textbf{NOT} freshly generated. This difficulty leads us to examine the effect of reusing tentative times and check if the algorithm introduces any dependency between previously independent random variables. 
\end{remark}

For the rest of the article, we assume the tentative times $T^l_i$ to be continuous random variables. We also adopt the following vectorized shorthands
\begin{itemize}
  \item for an ID set $\J$ and a scalar $t$
  \begin{align*}
    \Pr\{\vec{T}^l_\J\leq t\cond\traj{l-1}\}\coloneqq\Pr\{\forall j\in\J, T^l_j\leq t\cond\traj{l-1}\}.
  \end{align*}
  \item for an ID set $\J$ and a vector $\vec{t}_\J$
  \begin{align*}
    \Pr\{\vec{T}^l_\J\leq \vec{t}_\J\cond\traj{l-1}\}\coloneqq\Pr\{\forall j\in\J, T^l_j\leq t_j\cond\traj{l-1}\}.
  \end{align*}
\end{itemize}
It should be understood that in $\Pr\{\vec{T}^l_\J\leq t\cond\traj{l-1}\}$ or $\Pr\{\vec{T}^l_\J\leq \vec{t}_\J\cond\traj{l-1}\}$, we only consider the case where $t > t^{l-1}$ or $\forall j\in\J, t_j > t^{l-1}$. Otherwise, both would be zero. 

We are now ready to prove Theorem~\ref{thm:indep}. The mutual independence of tentative times has been proved by \cite{Gibson2000} for the case in which all interevent times follow exponential distributions. Here we prove the mutual independence of non-exponential tentative times, with the additional understanding that the execution of events can create/remove other event emitters.
\begin{theorem}\label{thm:indep}
  For any $l>0$, $\{T^l_i\}_{i\in E^l}$ consists of mutually independent random variables given $\traj{l-1}$. 
\end{theorem}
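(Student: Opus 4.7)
The plan is to prove Theorem~\ref{thm:indep} by induction on the iteration index $l$, using Assumption~\ref{assumption:assumption} to handle freshly generated times and arguing that the ``reused'' times inherit independence through a simple truncation argument on independent random variables.

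For the base case $l=1$, the trajectory up to $l-1=0$ is empty, and every $T^1_i$ is freshly generated from $\Psi_i$. Applying Assumption~\ref{assumption:assumption} inductively by peeling off one freshly generated variable at a time from $\{T^1_i\}_{i\in E^1}$, I get the full factorization of the joint CDF, which is exactly mutual independence.

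For the inductive step, assume $\{T^l_i\}_{i\in E^l}$ are mutually independent given $\traj{l-1}$, and consider iteration $l+1$. I would partition $E^{l+1}$ into the reused set $\R^{l+1} = E^{l+1}\setminus C_{K^l}$ and the freshly generated set $C_{K^l}$. For $i\in\R^{l+1}$ we have $T^{l+1}_i = T^l_i$, and $\R^{l+1}\subseteq E^l\setminus\{K^l\}$ since the executed emitter is self-removing. Conditioning on $\traj{l}$ amounts to adding the events $\{T^l_{k^l}=t^l\}$ and $\{T^l_i > t^l\}$ for each $i\in E^l\setminus\{k^l\}$ to the conditioning from iteration $l-1$. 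Because the $\{T^l_i\}_{i\in E^l}$ were mutually independent under the inductive hypothesis, and the added constraints are each a function of a single coordinate, the conditional joint distribution on the remaining coordinates factorizes as a product of left-truncated marginals. Restricting to $\R^{l+1}$, this gives mutual independence of $\{T^{l+1}_i\}_{i\in\R^{l+1}}$ given $\traj{l}$.

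It remains to splice in the freshly generated coordinates $\{T^{l+1}_j\}_{j\in C_{K^l}}$. Here I would invoke Assumption~\ref{assumption:assumption} iteratively: each such $T^{l+1}_j$ is freshly generated, so the joint CDF over $E^{l+1}$ factors off one freshly generated coordinate at a time, reducing to the already-established independence over $\R^{l+1}$. Combining the two pieces yields mutual independence of $\{T^{l+1}_i\}_{i\in E^{l+1}}$ given $\traj{l}$, closing the induction.

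The main obstacle I anticipate is the conditioning step, which requires some care: one must verify that conditioning mutually independent continuous random variables on one-dimensional events of the form $\{X_j = t\}$ and $\{X_i > t\}$ preserves independence (each marginal becomes its own left-truncation, with the point value of $X_{k^l}$ contributing only a normalizing density). Writing this out rigorously with the disintegration of the joint density, and being careful that the conditioning event from the trajectory is a.s.\ of positive-density type in the continuous setting, is the most delicate part; the rest of the argument is essentially bookkeeping built on Assumption~\ref{assumption:assumption}.
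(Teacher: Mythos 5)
Your proposal follows essentially the same route as the paper's proof: induction on the iteration index, partitioning $E^l$ into reused and freshly generated emitters, factoring off the fresh ones via Assumption~\ref{assumption:assumption}, and showing that conditioning the mutually independent iteration-$(l-1)$ times on the min/argmin event yields a product of left-truncated marginals. The delicate conditioning step you flag at the end is handled in the paper by expressing the conditional probability as a $\lim_{\epsilon\to 0}$ over the events $\{t^{l-1}-\epsilon\leq T^{l-1}_{k^{l-1}}\leq t^{l-1}\}$ and cancelling the factors not involving the reused indices, which is precisely the disintegration you anticipate.
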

The proof is given in \nameref{S2_Appendix}. 

\begin{corollary}(Corollary of Theorem~\ref{thm:indep})\label{corollary:inductOne}
  If $\{T^l_i\}_{i\in\J}$ consists of reused tentative times, then their joint CDF can be expressed as 
  \begin{align*}
    \Pr\{\vec{T}^l_\J\leq \vec{t}_\J\cond\traj{l-1}\} = \prod_{i\in\J} \Pr\{T^{l-1}_i\leq t_i\cond T^{l-1}_i > t^{l-1}, \traj{l-2}\}.
  \end{align*}
\end{corollary}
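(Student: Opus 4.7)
My plan is a two-stage factorization that invokes Theorem~\ref{thm:indep} at iterations $l$ and $l-1$ in turn, connected by the observation that a reused tentative time is \emph{deterministically} copied over from the previous iteration. The goal is to (i) first split the joint CDF of $\vec{T}^l_\J$ into marginals at iteration $l$, and then (ii) rewrite each marginal in terms of the marginal from iteration $l-1$ conditioned on survival past $t^{l-1}$.

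For step (i), since every ID in $\J$ is reused we have $\J \subseteq \R^l \subseteq E^l$, so Theorem~\ref{thm:indep} applied at iteration $l$ immediately gives
\begin{align*}
  \Pr\{\vec{T}^l_\J \leq \vec{t}_\J \cond \traj{l-1}\} = \prod_{i\in\J} \Pr\{T^l_i \leq t_i \cond \traj{l-1}\}.
\end{align*}
For step (ii), the NRM update rule stated immediately after Algorithm~\ref{alg:nrm} tells us that a reused tentative time satisfies $T^l_i = T^{l-1}_i$ with no fresh randomness, so each marginal equals $\Pr\{T^{l-1}_i \leq t_i \cond \traj{l-1}\}$. I then unfold $\traj{l-1} = \traj{l-2} \cup \{K^{l-1}=k^{l-1},\,T^{l-1}=t^{l-1}\}$; the extra conditioning event can be written, at the level of the joint density of the continuous $T^{l-1}_j$'s, as $\{T^{l-1}_{k^{l-1}}=t^{l-1}\}$ together with $\{T^{l-1}_j > t^{l-1}\}$ for every other $j \in E^{l-1}\setminus\{k^{l-1}\}$. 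Since $i\in\R^l$ survived iteration $l-1$, we have $i\in E^{l-1}\setminus\{k^{l-1}\}$ (otherwise $i$ would either not exist yet or have been self-removed). Applying Theorem~\ref{thm:indep} at iteration $l-1$, the random variables $\{T^{l-1}_j\}_{j\in E^{l-1}}$ are mutually independent given $\traj{l-2}$, so the constraints involving indices $j\neq i$ decouple from the marginal of $T^{l-1}_i$, leaving only $\{T^{l-1}_i > t^{l-1}\}$.

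I expect the last decoupling step to be the main obstacle to write up cleanly: one has to justify, via the regular conditional density built from the independent marginals, that conditioning on the full minimum/argmin event $\{K^{l-1}=k^{l-1},\,T^{l-1}=t^{l-1}\}$ reduces, for a single survivor index $i$, to conditioning merely on $T^{l-1}_i > t^{l-1}$. This is precisely where the recursive use of Theorem~\ref{thm:indep} at the previous iteration does its real work. Once that reduction is in place, chaining the three equalities — factorization at iteration $l$, the deterministic equality $T^l_i = T^{l-1}_i$, and conditioning reduction at iteration $l-1$ — yields exactly $\prod_{i\in\J} \Pr\{T^{l-1}_i \leq t_i \cond T^{l-1}_i > t^{l-1}, \traj{l-2}\}$, as claimed.
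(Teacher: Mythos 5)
Your proof is correct and follows essentially the same route as the paper: the paper establishes exactly this identity as the chain from Eq~(\ref{eqn:secondTerm}) to Eq~(\ref{eqn:manyReused}) inside the proof of Theorem~\ref{thm:indep}, using the same $\epsilon$-limit rewriting of the conditioning event $\{K^{l-1}=k^{l-1}, T^{l-1}=t^{l-1}\}$ in terms of the tentative times at iteration $l-1$ and the same independence-at-$(l-1)$ decoupling that you identify as the crux. The only organizational difference is that you factorize first by invoking Theorem~\ref{thm:indep} at iteration $l$ and then reduce each marginal separately, whereas the paper computes the joint CDF of the reused block in one pass (necessarily so, since there that chain appears before independence at iteration $l$ is available) and obtains the product form as the output of that computation.
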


We have already managed to express the distribution of $T^l_i$ in terms of the distribution of $T^{l-1}_i$ from the previous iteration. To further our analysis, we now express the distribution of $T^l_i$ in terms of the user-specified interevent distributions introduced in Defn.~\ref{def:ee}. To this end, we first define the time of creation of an event emitter. 
\begin{definition}(time of creation)
  $t^l_i$ is the time when event emitter $i$ of the $l$-iteration was created.
\end{definition}
From a algorithmic point of view, keeping track of times of creation is trivial: simply store the clock value in the event emitter when it is created. However, mathematically speaking, we can recover times of creation only if we have complete information of the previous trajectory $\{\traj{l-1}\}$. For initially present event emitters, their times of creation are unsurprisingly $t^0=0$. For event emitters created in iteration $l$, their times of creation are $t^l$. In both cases, $t^l_i$ only takes values from $\{t^0, t^1, t^2,..., t^{l-1}\}$. It is also possible for event emitters with the same ID to be recreated after its removal, in which case the time of creation points to the most recent creation. 

\begin{lemma}\label{lemma:TDist}
  Let $\Psi_i$ be the CDF of the interevent distribution of emitter $i$, and $t^l_i$ the time of creation of the event emitter $e^l_i$. The CDF of $T^l_i$ is
  \begin{align}\label{eqn:TDistLong}
    \Pr\{T^l_i\leq t\cond\traj{l-1}\} = \frac{\Psi_i(t-t^l_i) - \Psi_i(t^{l-1}-t^l_i)}{1 - \Psi_i(t^{l-1} - t^l_i)}.
  \end{align}
\end{lemma}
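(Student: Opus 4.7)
The plan is to prove the lemma by induction on the iteration index $l$, exploiting Corollary~\ref{corollary:inductOne} to handle reused tentative times and the definition of the algorithm to handle freshly generated ones. The intuition is that $T^l_i - t^l_i$ was originally drawn from $\Psi_i$ at creation time, and every subsequent reuse simply re-conditions on the fact that the emitter has survived past the current clock, so the formula is just the residual-lifetime CDF of $\Psi_i$ given age $t^{l-1}-t^l_i$.

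For the base case $l=1$, every emitter in $E^1$ is initial with $t^1_i = t^0 = 0$ and $T^1_i\sim\Psi_i$, so $\Pr\{T^1_i\leq t\} = \Psi_i(t)$, which matches the claimed expression once we use $\Psi_i(0)=0$. For the inductive step, I would assume the formula at iteration $l-1$ and split into two cases according to whether $i\in C_{K^{l-1}}$ (freshly generated) or $i\in\R^l$ (reused). In the fresh case, step (e) of Algorithm~\ref{alg:nrm} gives $T^l_i = \tilde{T}^l_i + t^{l-1}$ with $\tilde{T}^l_i\sim\Psi_i$ drawn independently of $\{\traj{l-1}\}$, and $t^l_i = t^{l-1}$, so $\Pr\{T^l_i\leq t\cond\traj{l-1}\} = \Psi_i(t-t^{l-1})$, which is exactly the right-hand side (the subtracted term vanishes).

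The interesting case is the reused one. Here $t^l_i = t^{l-1}_i$ because the emitter was not recreated at step $l-1$, and Corollary~\ref{corollary:inductOne} applied to the singleton $\J=\{i\}$ gives
\begin{align*}
\Pr\{T^l_i\leq t\cond\traj{l-1}\} \;=\; \Pr\{T^{l-1}_i\leq t\cond T^{l-1}_i > t^{l-1},\traj{l-2}\}.
\end{align*}
Writing the right-hand side as $[F(t)-F(t^{l-1})]/[1-F(t^{l-1})]$ with $F(s)\coloneqq\Pr\{T^{l-1}_i\leq s\cond\traj{l-2}\}$, and substituting the inductive hypothesis for $F$, the common denominator $1-\Psi_i(t^{l-2}-t^{l-1}_i)$ cancels and the ratio telescopes to $[\Psi_i(t-t^{l-1}_i) - \Psi_i(t^{l-1}-t^{l-1}_i)]/[1-\Psi_i(t^{l-1}-t^{l-1}_i)]$. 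Replacing $t^{l-1}_i$ by $t^l_i$ yields the desired formula.

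The main obstacle is not any deep inequality but rather the bookkeeping: one has to justify cleanly that (i) the Corollary really does apply to a single reused emitter, (ii) $t^l_i=t^{l-1}_i$ whenever $i\in\R^l$, and (iii) the conditioning on $T^{l-1}_i>t^{l-1}$ is exactly the right event, namely the fact that emitter $i$ survived into iteration $l$ without being the argmin at iteration $l-1$. Once these points are spelled out, the algebraic simplification is immediate and the induction closes.
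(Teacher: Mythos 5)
Your proposal is correct and follows essentially the same route as the paper's proof in S3 Appendix: induction on $l$, with the fresh case handled by $t^l_i=t^{l-1}$ so that $\Psi_i(t^{l-1}-t^l_i)=0$, and the reused case handled by applying Corollary~\ref{corollary:inductOne} to the singleton $\{i\}$ and substituting the inductive hypothesis so the common denominator $1-\Psi_i(t^{l-2}-t^l_i)$ cancels. The bookkeeping points you flag (applicability of the corollary to a single reused emitter, $t^l_i=t^{l-1}_i$ for $i\in\R^l$) are exactly the ones the paper relies on, with the singleton case of the corollary already justified by Theorem~\ref{thm:indep} and Eq~(\ref{eqn:oneReused}) of its proof.
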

The proof is given in \nameref{S3_Appendix}.

To make Eq~(\ref{eqn:TDistLong}) less laborious, we adopt the notations in\cite{Boguna2014} as a shorthand. Let $\Psi$ be the user-specified interevent distribution such as exponential, Gamma or Weibull distributions. The distribution shifted by $t$ is expressed as in Eq~(\ref{eqn:conddf}),
\begin{align}
  \Psi(\tau\cond t) \coloneqq \frac{\Psi(\tau + t) - \Psi(t)}{1 - \Psi(t)}, \;\;\;\;\psi(\tau\cond t) \coloneqq \pdv{\Psi(\tau\cond t)}{\tau} = \frac{\psi(\tau + t)}{1 - \Psi(t)}.\label{eqn:conddf}
\end{align}
A simple pattern matching using Eq~(\ref{eqn:TDistLong}) and Eq~(\ref{eqn:conddf}) yields Eq~(\ref{eqn:tdist})
\begin{align}
\Pr\{T^l_i \leq t\cond\traj{l-1}\} =  \Psi_i(t-t^{l-1}\cond t^{l-1} - t^l_i). \label{eqn:tdist}
\end{align} 

\begin{example}
  We take another look at the example in Fig.~\ref{figNRMExample}. We recall that these tentative times are well-defined only when the previous trajectory is given. In this particular case, the fact that $T^2_1$ exists indicates that $T^1_1$ must be larger than the minimum of tentative times in the first iteration, $t^1$, giving us Eq~(\ref{eqn:examplenrm}):
  \begin{align}
    \Pr\{T^2_1\leq t\cond\traj{1}\} = \Pr\{T^1_1\leq t\cond T^1_1>t^1\} = \Psi_1(t - t^1\cond t^1). \label{eqn:examplenrm}
  \end{align}
  Indeed, Eq~(\ref{eqn:examplenrm}) is the special case of Eq~(\ref{eqn:tdist}) where $l=2$ and $t^2_1=t^0=0$.
\end{example}

We now look into the per-iteration computational costs and important implementation details of NRM. We make the following assumptions:
\begin{itemize}
  \item The number of active event emitters is obviously always upper-bounded by $N$. In particular, for epidemics on networks, $N$ scales as $\bigO(V+E)$.
  \item The number of event emitters created/removed per iteration is $\bigO(1)$. This assumption makes sense for epidemic simulations on networks since these numbers are usually upper-bounded by the mean degree of the network (e.g. $\Inf(x\to y)$ for each susceptible neighbor $y$ of $x$).
  \item The complexity of generating each interevent time is $\bigO(1)$; generated interevent times follow the distributions indicated in Algorithm~\ref{alg:nrm} exactly and no approximation is involved.
\end{itemize}
These assumptions hold for the complexity analyses of all the algorithms in this article. 

Since NRM finds the minimum tentative time proposed by existing event emitters, we make use of a min-heap. The min-heap data structure should support the following three operations:
\begin{itemize}
  \item \textbf{RemoveMin}: Retrieves and removes the top of the heap, which is the minimum tentative time; time complexity is $\bigO(\log N)$.
  \item \textbf{Insert}: Inserts an event emitter and its associated tentative time into the heap; time complexity is $\bigO(\log N)$.
  \item \textbf{RemoveByID}: Removes an event emitter in the heap (not necessarily at the top) by its ID, while maintaining the heap property; time complexity is $\bigO(\log N)$.
\end{itemize}
While \textbf{RemoveMin} and \textbf{Insert} are standard, operation \textbf{RemoveByID} is not, since one can only access the element at the top in a classic binary heap implementation. To achieve this, we need a hash table that maps event emitter IDs to their locations in the heap and perform heap percolate-down from the location indicated by the hash table, instead of the top. The time complexity is still $\bigO(\log N)$ since it cannot incur more operations than a percolate-down from the top.

\textbf{RemoveMin} is invoked once per iteration, which is $\bigO(\log N)$. Because we assume that the number of event emitters created/removed is always $\bigO(1)$, the per-iteration time complexity of \textbf{Insert} and \textbf{RemoveByID} is also $\bigO(\log N)$. 

The per-iteration time complexity of Algorithm~\ref{alg:nrm} is thus $\bigO(\log N)$.

\section{Gillespie Algorithms}
The Gillespie algorithm, also termed the ``Direct Method'' in \cite{gillespie1976general}, aims to reduce random number generation. In Algorithm~\ref{alg:nrm}, we need to generate as many random numbers as the number of newly created event emitters in each iteration. In all variants of the Gillespie algorithm we are about to introduce in this section, however, exactly two random numbers need to be generated per iteration. We call these algorithms ``Gillespie algorithms'' and introduce them from the versatile but slow ones to the limited but fast ones. 

\subsection{General Non-Markovian Epidemics}
Boguñá et al introduce in \cite{Boguna2014} the non-Markovian Gillespie Algorithm~(nMGA) and extend the Markovian Gillespie algorithm in \cite{gillespie1976general} to the more general non-Markovian scenario. In addition to the reduced random number generation mentioned before, nMGA also lends itself better to the case in which computing event rates is straightforward while generating interevent times requires nontrivial numerical methods, as illustrated by the example of cooperative infections given in \cite{Boguna2014}. The authors of \cite{Boguna2014} also propose an approximate version of nMGA, which approximates the actual event rate function with a staircase function. In this article, to prevent confusion, we call the exact version of nMGA ``nMGA-Exact'' and the approximate version ``nMGA-Approx''.

We first review both versions of nMGA, the exact~(Algorithm~\ref{alg:nMGAExact}) and the approximate~(Algorithm~\ref{alg:nMGAApprox}), using the same notations as Algorithm~\ref{alg:nrm}. 

\begin{algorithm}
  \caption{nMGA-Exact}\label{alg:nMGAExact}
  \begin{enumerate}
    \item (Initialization Phase)
    \begin{itemize}
      \item $E^1$ is given as the initial set of event emitters
      \item The system is at initial state $s\in\cS$
      \item Set the clock $T^0\gets 0$
      \item Set iteration index $l\gets 1$
      \item Store times of creation $\forall i\in E^1, t^1_i\gets 0$
    \end{itemize}
    \item (Main Phase)
    \begin{enumerate}
      \item Algorithm terminates if $|E^l| = 0$
      \item Compute the global survival probability $\Phi^l(t) = \prod_{i\in E^l} \left[1-\Psi_i(t\cond t^{l-1}-t^l_i)\right]$
      \item Generate $\Delta^l\sim\Phi^l$ and progress in time by $T^l\gets T^{l-1}+\Delta^l$
      \item Compute PMF with weights $\Pi^l(i)\coloneqq \frac{\psi_i(0\cond t^l-t^l_i)}{\sum_i\psi_i(0\cond t^l-t^l_i)}$ and generate $K^l\sim\Pi^l$
      \item Update emitters by $E^{l+1}\gets (E^l\backslash R_{K^l}(s))\cup C_{K^l}(s)$
      \item Store times of creation $\forall i\in C_{K^l}(s), t^l_i\gets T^l$
      \item Apply state transition by $s\gets f_{K^l}(s)$
      \item Increment $l\gets l+1$
      \item Go back to step (a)
    \end{enumerate}
  \end{enumerate}
\end{algorithm}

\begin{algorithm}
  \caption{nMGA-Approx}\label{alg:nMGAApprox}
  We replace step (b) and (c) in Algorithm~\ref{alg:nMGAExact} with the following two steps. First compute 
  \[\lambda^l_i \coloneqq \psi_i(0\cond t^{l-1} - t^l_i).\]
  Then generate
  \[\Delta^l\sim \text{Exp}\left(\sum_i \lambda^l_i\right),\]
  and progress in time as usual by $T^l \gets t^{l-1} + \Delta^l$. Everything else stays the same as Algorithm~\ref{alg:nMGAExact}.
\end{algorithm}

We then prove that Algorithm~\ref{alg:nrm} and Algorithm~\ref{alg:nMGAExact} are statistically equivalent. Each of Algorithm~\ref{alg:nrm} and Algorithm~\ref{alg:nMGAExact}, after $l$ iterations of execution, produces a realization of a $2l$-dimensional random vector $\TRAJ{l}$. We thus say that two algorithms are equivalent if for any $l>0$, their trajectories up to $l$ are equal in distribution. 

Since the random vector $\TRAJ{l}$ contains both discrete and continuous random variables, standard notations such as probability density function for random vectors cannot be applied directly. We can, however, work around this difficulty by conditioning. Again, let $\traj{l-1}$ be the previous trajectory. Note that the density function of $T^l$
\begin{align*}
  f_{T^l}(t^l) \coloneqq \dv{}{t}\Bigr|_{t^l} \Pr\{T^l\leq t\cond\traj{l-1}\}
\end{align*}
is well-defined due to the assumption on continuity. At the same time, given $T^l=t^l$, $K^l$ is a discrete random variable taking values from $E^l$, whose distribution
\begin{align*}
  \Pi^l(k^l\cond t^l) \coloneqq \Pr\{K^l=k^l \cond T^l=t^l,\traj{l-1}\}
\end{align*}
is described by a probability mass function. For instance, in Algorithm~\ref{alg:nMGAExact}, this PMF is explicitly defined to be
\begin{align}\label{eqn:nmgaPMF}
  \tilde\Pi^l(k^l\cond t^l) \coloneqq \frac{\psi_{k^l}(0\cond t^l-t^l_{k^l})}{\sum_{i\in E^l}\psi_i(0\cond t^l-t^l_i)}.
\end{align}
Combining the two, the joint density function of $T^l, K^l$ is
\begin{align}\label{eqn:algodef}
  f_{T^l, K^l}(t^l, k^l\cond \traj{l-1}) \coloneqq f_{T^l}(t^l)\cdot\Pi^l(k^l\cond t^l).
\end{align} 
We emphasize that the behavior of the algorithm is completely defined by Eq~(\ref{eqn:algodef}). Let $f^A_{T^l,K^l}, f^B_{T^l,K^l}$ be the joint density functions for algorithm A and B. The two algorithms are equivalent if for any $l>0$, given any previous trajectory $\traj{l-1}$, 
\begin{align*}
  f^A_{T^l,K^l}(\cdot\cond\traj{l-1}) \equiv f^B_{T^l,K^l}(\cdot\cond\traj{l-1}),
\end{align*}
which trivially implies that the joint distributions of their trajectories are also the same. 
\begin{theorem}\label{thm:equiv}
  Given Assumption 1, Algorithm~\ref{alg:nrm} and Algorithm~\ref{alg:nMGAExact} are equivalent.
\end{theorem}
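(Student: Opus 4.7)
The plan is to verify the density equality prescribed by Eq~(\ref{eqn:algodef}): for each $l>0$ and any admissible $\traj{l-1}$, I would compute $f_{T^l,K^l}(t^l,k^l\cond\traj{l-1})$ under both algorithms on the common support $t^l>t^{l-1}$ and show the two coincide. Since the prior trajectory is frozen, everything reduces to an explicit calculation built on Theorem~\ref{thm:indep} and Lemma~\ref{lemma:TDist}.

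For Algorithm~\ref{alg:nrm}, $T^l=\min_{i\in E^l} T^l_i$ and $K^l=\text{argmin}_{i\in E^l} T^l_i$; Theorem~\ref{thm:indep} guarantees that the $T^l_i$ are mutually independent conditional on $\traj{l-1}$, and Lemma~\ref{lemma:TDist} gives each marginal CDF as $\Psi_i(t-t^{l-1}\cond t^{l-1}-t^l_i)$. The standard continuous-argmin identity for independent variates then yields
\begin{align*}
  f^{\mathrm{NRM}}_{T^l,K^l}(t^l,k^l\cond\traj{l-1}) = \psi_{k^l}(t^l-t^{l-1}\cond t^{l-1}-t^l_{k^l}) \prod_{i\in E^l\setminus\{k^l\}} \bigl[1-\Psi_i(t^l-t^{l-1}\cond t^{l-1}-t^l_i)\bigr].
\end{align*}
For Algorithm~\ref{alg:nMGAExact}, the increment $\Delta^l=T^l-t^{l-1}$ is drawn directly from $\Phi^l$, so $f^{\mathrm{nMGA}}_{T^l}(t^l)=-\Phi^{l\prime}(t^l-t^{l-1})$; multiplying by the PMF $\tilde\Pi^l$ defined in Eq~(\ref{eqn:nmgaPMF}) gives the joint density.

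Matching the two expressions comes down to the hazard-rate identity $\psi_i(\tau\cond s)/[1-\Psi_i(\tau\cond s)]=\psi_i(0\cond \tau+s)$, which is immediate from Eq~(\ref{eqn:conddf}). Differentiating $\Phi^l$ logarithmically and applying this identity gives $-\Phi^{l\prime}(\tau)=\Phi^l(\tau)\sum_i\psi_i(0\cond t^l-t^l_i)$, so the denominator of $\tilde\Pi^l$ cancels against $-\Phi^{l\prime}$; the remaining hazard-rate factor at $k^l$ then recombines with $\Phi^l$ via the same identity to reproduce the NRM expression exactly. The main obstacle is not this final algebra but the conditioning structure that licenses it: the entire calculation hinges on the $T^l_i$, most of which have been reused without fresh randomness, remaining mutually independent given $\traj{l-1}$, which is precisely the nontrivial content of Theorem~\ref{thm:indep}. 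With that in hand, Theorem~\ref{thm:equiv} is pattern matching on the hazard rate.
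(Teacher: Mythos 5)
Your proposal is correct and follows essentially the same route as the paper: both rest on Theorem~\ref{thm:indep} and Lemma~\ref{lemma:TDist}, derive the joint density of $(T^l,K^l)$ for NRM as $\psi_{k^l}(t^l-t^{l-1}\cond t^{l-1}-t^l_{k^l})\prod_{i\neq k^l}\bigl[1-\Psi_i(t^l-t^{l-1}\cond t^{l-1}-t^l_i)\bigr]$ (the paper obtains this via an explicit $\epsilon$-limit rather than citing the argmin identity), and close with the same hazard-rate identity $\psi_i(\tau\cond s)/[1-\Psi_i(\tau\cond s)]=\psi_i(0\cond\tau+s)$. The only cosmetic difference is that the paper matches the marginal $f_{T^l}$ and the conditional PMF $\Pi^l$ separately, whereas you equate the joint densities in one step; the algebra is identical.
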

The proof is given in \nameref{S4_Appendix}.

Algorithm~\ref{alg:nMGAExact} is rather inefficient, because generating random numbers from $\Phi^l(t) = \prod_i \left[1-\Psi_i(t\cond t^{l-1}-t^l_i)\right]$ (step (c) of Algorithm~\ref{alg:nMGAExact}) with the inverse CDF technique involves intensive root-finding. For this particular problem, we start with the initial bracket of [0.1, 1] and then keep halving the lower bound and doubling the upper bound until the two ends are of different signs. For the root-finding, we make use of Brent's method\cite{brent2013algorithms} provided by SciPy\cite{2020SciPy-NMeth}. 

Every evaluation done by the root-finding procedure scales as $\bigO(N)$ and if the desired error tolerance is $\epsilon$, it takes $\bigO(\log \frac{1}{\epsilon})$ evaluations for Brent's method to converge. 

The per-iteration time complexity of Algorithm~\ref{alg:nMGAExact} is thus $\bigO(N\log\frac{1}{\epsilon})$.

By replacing step (b) and (c) of Algorithm~\ref{alg:nMGAExact} with sampling from an exponential distribution, Algorithm~\ref{alg:nMGAApprox} gets rid of the costly root-finding. There are more rigorous ways \cite{Boguna2014} to justify the usage of an exponential distribution, but one intuitive way is to treat every interevent time as exponentially distributed with its corresponding instantaneous rate and update the rate in each iteration, as illustrated in Fig.~\ref{figHazard}. It is trivial to check that this approximation is exact when all interevent distributions happen to be exponential. Indeed, this approximation assumes the hazard function $\lambda^l_i$ to be constant in the time interval $(t^{l-1}, t^l)$, and would be quite accurate if $t^l-t^{l-1}$ happens to be small. 
\begin{figure}[!h]
  \centering
  \includegraphics[scale=0.5]{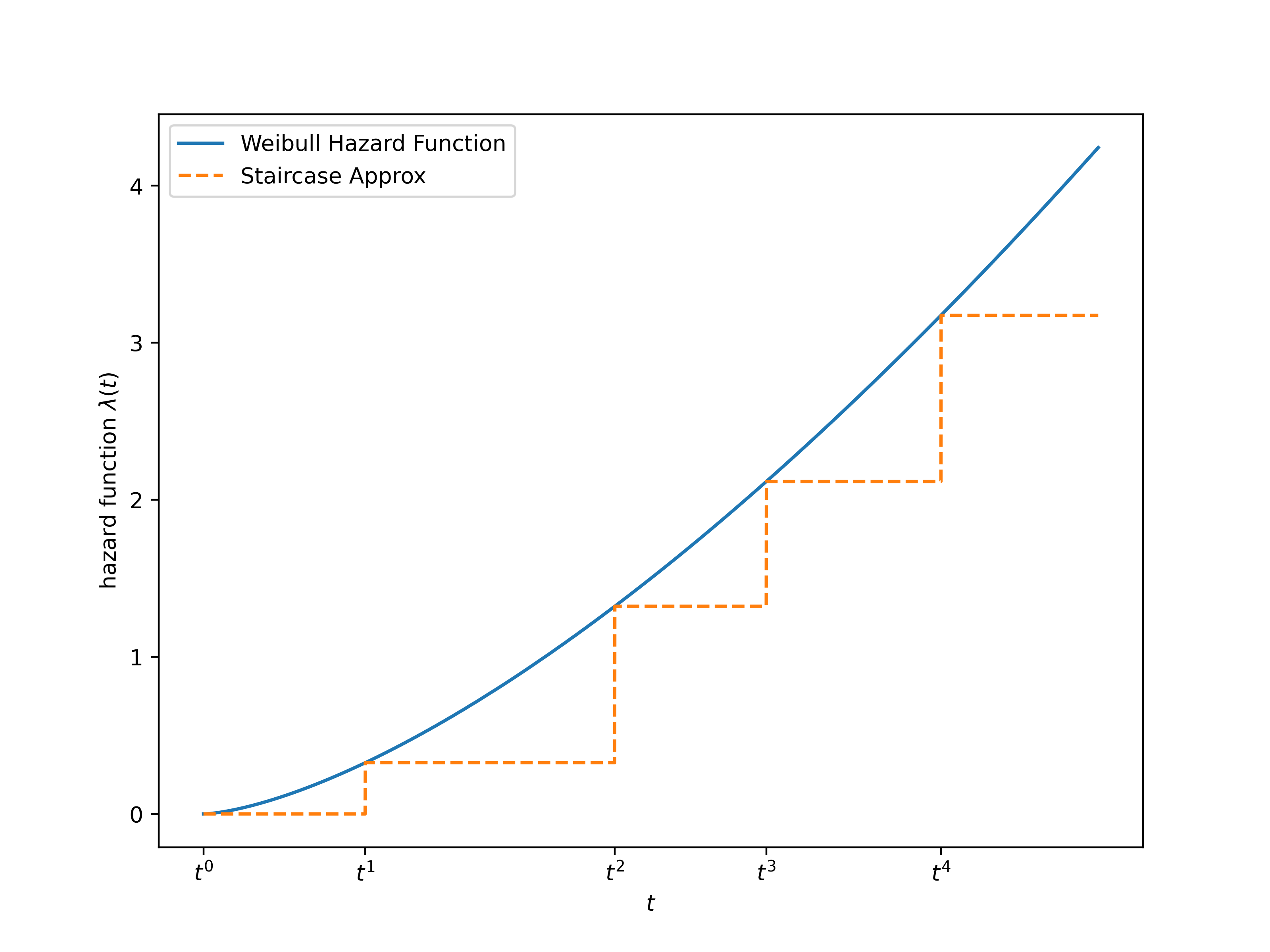} 
  \caption{
    \textbf{Staircase approximation v.s. actual hazard function.} $t^0,...,t^4$ are times of execution. The Weibull hazard function with $\kappa=1.5, \lambda=1$ is approximated by a staircase function since instantaneous rates are recomputed and updated only when an event occurs. 
  }
  \label{figHazard}
\end{figure}

We also make some observations about its error bound. Recall that we have the following equation of the hazard function $\lambda(t)$, PDF $\psi(t)$ and complementary CDF (CCDF) $\Phi(t)$:
\begin{align}\label{eqn:hazardAndPDF}
  &\lambda(t) \coloneqq \frac{\psi(t)}{\Phi(t)} = -\dv{}{t}\ln\Phi(t)\implies\int^t_0 d\tau\lambda(\tau) = -\ln(\Phi(t)).
\end{align} 
The hazard function of event emitter $i$ is $\lambda_i(t)\coloneqq\frac{\psi_i(t)}{\Phi_i(t)}$. Since in Algorithm~\ref{alg:nMGAApprox} the hazard function is recomputed only at $t^0, t^1, t^2\dots$, we denote the staircase hazard function of event emitter $i$, $\bar\lambda_i(t)$, as 
\[\bar\lambda_i(t) \coloneqq \lambda_i(t^l) \text{ if } t^l \leq t < t^{l+1}.\]
As a result, we have
\[-\ln(\Phi_i(t)) = \int^t_0 d\tau\lambda_i(\tau) = \epsilon + \int^t_0 d\tau\bar\lambda_i(\tau) = \epsilon + \sum_{l=0}^{t^l < t} \lambda_i(t^l)\cdot\Delta t^l\] 
where 
\[\Delta t^l = \begin{cases}
  t^{l+1} - t^l &\text{ if } t^{l+1} < t\\
  t - t^l &\text{ otherwise.}
\end{cases}\] 
Note that the error term $\epsilon$ is the difference between the integral $\int^t_0 d\tau\lambda_i(\tau)$ and its left Riemann sum. As a sanity check, note that $\epsilon=0$ if $\lambda_i(t)$ is constant(as in exponential distribution). Because $\{t^0, t^1, t^2, ... t\}$ form a partition of the interval $[0, t]$, $\epsilon$ goes to zero as the mesh, $\max_l\Delta t^l$, goes to zero. If we assume for simplicity that $\Delta t^l \equiv \Delta t$ and $M_1\coloneqq \sup_{\tau\in[0, t]} |\lambda_i'(\tau)|$, then $\epsilon\leq M_1 t \Delta t$ due to the choice of left Riemann sum.

As for complexity, for each iteration, all instantaneous rates $\lambda^l_i$ need to be recomputed and summed up, which is $\bigO(N)$; generating $k$ from the PMF is also $\bigO(N)$ if standard inverse-CDF is adopted.

The per-iteration time complexity of Algorithm~\ref{alg:nMGAApprox} is thus $\bigO(N)$.

\subsection{Markovian Epidemics with Heterogeneous Infection Rates}
In the setting of this section, we assume recovery rates to be homogeneous with rate $\gamma$ on any infected node. However, per-edge infection rates are not necessarily homogeneous, in the sense that the per-edge infection rate from node $i$ to node $j$ should be indexed as $\beta_{i\to j}$. If we assume per-edge infections to be mutually independent as we do in previous sections, the total rate of infection on a susceptible node $j$, $\beta_j$, is expressed as 
\begin{align*}
  \beta_j\coloneqq \sum_{i\in\N(j)}^{i\text{ infected}} \beta_{i\to j}.
\end{align*}
The authors of \cite{kiss2017mathematics} consider this setting and propose their variant of Gillespie algorithm. Because this algorithm involves computing total rates of infection on all susceptible nodes, we call it node-centric Gillespie algorithm~(node-centric GA) for disambiguation, outlined in Algorithm~\ref{alg:nodecentric}. 

\begin{algorithm}
  \caption{Node-centric GA}\label{alg:nodecentric}
  \begin{enumerate}
    \item For each susceptible node $j$, compute $\beta_j$ by 
    \[\beta_j = \sum_{i\in\N(j)}^{i\text{ infected}} \beta_{i\to j}.\]
    \item Find total rate of recovery $\lambda_R$ and total rate of infection $\lambda_I$ by
    \[\lambda_R = \gamma\cdot(\text{\# of infected nodes}),\;\;\;\; \lambda_I = \sum_{\text{susceptible }j} \beta_j.\]
    \item Time to next event is $t\sim\text{Exp}(\lambda_R+\lambda_I)$.
    \item Generate $u\sim\text{Uniform}(0, 1)$. If $u<\lambda_R / (\lambda_R+\lambda_I)$, then the next event is a recovery; otherwise, an infection.
    \item If the next event is a recovery, choose an infected node uniformly at random to recover and update the infection rates of its neighbors.
    \item If the next event is an infection, choose a susceptible node to infect according to the PMF $p_j = \beta_j / \lambda_I$ and update the infection rates of its neighbors.
    \item Go back to step 1.
  \end{enumerate}
\end{algorithm}

While all other steps can be optimized to $\bigO(1)$ via careful tracking, step 6 in Algorithm~\ref{alg:nodecentric} is the obvious bottleneck, since sampling from a PMF is $\bigO(N)$. 

The per-iteration complexity of Algorithm~\ref{alg:nodecentric} is thus $\bigO(N)$. 

\subsection{Markovian Epidemics with Homogeneous Infection Rates}
Note that in Algorithm~\ref{alg:nodecentric}, it takes little effort to locate the infected node to recover in step 5 of Algorithm~\ref{alg:nodecentric}, since recovery rates are assumed to be homogeneous and sampling uniformly at random is $\bigO(1)$. If we go a step further and assume all per-edge infection rates to be $\beta$, i.e., $\forall i, j, \beta_{i\to j}\equiv\beta$, we can optimize away the costly step 6 of Algorithm~\ref{alg:nodecentric}. The idea of grouping event emitters with the same rate is a natural one, and it is not surprising if many have been doing this optimization already without realizing it. Indeed, in \cite{ferreira2012epidemic}, the authors adopt this idea for their discretized simulations. We present in this section a version of the algorithm that supports continuous-time simulations as well as some tips for its implementation. For the rest of the article, we will refer to this algorithm as optimized GA. 

This optimization involves maintaining a hashset of infected nodes $S_R$ and another hashset of active edges (edges connecting an infected node to a susceptible one) $S_I$. This hashset data structure should support the following operations:
\begin{itemize}
  \item $\bigO(1)$ insertion.
  \item $\bigO(1)$ lookup \& deletion.
  \item $\bigO(1)$ choose an element uniformly at random.
\end{itemize}
While the first two are standard hashset operations, the third requires some engineering. A possible implementation is to maintain a hashtable that maps the elements into the indices of a random-access array. The algorithm is stated in Algorithm~\ref{alg:OptGA}. 
\begin{algorithm}
  \caption{Optimized GA}\label{alg:OptGA}
  \begin{enumerate}
    \item Find total rate of recovery $\lambda_R$ and total rate of infection $\lambda_I$ by 
    \[\lambda_R = |S_R|\cdot\gamma, \;\;\;\;\lambda_I = |S_I|\cdot\beta.\]
    \item Time to next event is $t\sim\text{Exp}(\lambda_R+\lambda_I)$.
    \item Generate $u\sim\text{Uniform}(0, 1)$. If $u<\lambda_R / (\lambda_R+\lambda_I)$, then the next event is a recovery; otherwise, an infection.
    \item If the next event is a recovery (infection), choose an element from $S_R$($S_I$) uniformly at random to determine the specific location of the recovery (infection).
    \item Update the two hashsets according to Table~\ref{tableSIS}.
    \item Go back to step 1.
  \end{enumerate}
\end{algorithm}

Because each individual step is $\bigO(1)$, this optimized Gillespie algorithm indeed brings the per-iteration cost down to $\bigO(1)$.

Even though we have only shown how to simulate an SIS epidemic, this optimizing technique of grouping event emitters with the same Poisson rate can be applied to all compartment-based network models with homogeneous, constant transition rates. As long as the number of groups does not exceed $\bigO(\log N)$, optimized GA will outperform NRM. However, it does not work for more intricate scenarios where infection rates are allowed to be heterogeneous, either following a certain distribution\cite{qu2017sis}\cite{ferguson2005strategies}, such as gamma or log-normal, or assigned according to empirical contact intensity data\cite{buono2013slow}. Nor does it work for non-Markovian cases, where event rates are time-varying and far from homogeneous. 

\section{Experiments}
We have analyzed the performance of different algorithms in earlier sections using the big-O notation. The goal of this section is to study their performance in practice. 

\subsection{Non-Markovian Simulations}
In this section, we conduct two numerical simulations using the three algorithms for non-Markovian epidemics: NRM~(\ref{alg:nrm}), nMGA-Exact~(\ref{alg:nMGAExact}), and nMGA-Approx~(\ref{alg:nMGAApprox}). 

In both experiments, the network used is an Erdos-Renyi $G(n, p)$ random graph\cite{erdHos1960evolution} with mean degree $\meandeg=5$. The network, once generated, will be held fixed for all repeated runs to reduce irrelevant variance. Both experiments start with 10 initially infected nodes. The implementation largely follows Table~\ref{tableSIS}, even though we need to pay extra care to the bookkeeping of elapsed times (see \nameref{S1_Appendix}), as is also suggested in \cite{Boguna2014}. 

The epidemic model of choice is an SIS epidemic with per-edge time to infection following
$\text{Weibull}(\kappa=2, \lambda=5.641895836)$, with a mean time to infection of $\bar{T}_{\text{inf}} = 5$. The time to recovery follows $\text{Exp}(0.25)$, with a mean time to recovery of $\bar{T}_{\text{rec}} = 4$.

The effective infection rate, as defined for the general non-Markovian case, is $\mathcal{R} = \langle k\rangle\frac{\bar{T}_{\text{rec}}}{\bar{T}_{\text{inf}}} = 5\cdot \frac{4}{5} = 4$,
which is way above 1 and guarantees an endemic state. 

In both experiments, the trajectories of each algorithm are collected and linearly interpolated. These interpolated trajectories are then evaluated on equally spaced timestamps, allowing us to compute the sample mean standard deviation for each timestamp, with which we produce the error bars. 

The first setup simulates an SIS epidemic on a network with 1000 nodes and trajectories are produced by running 10000 iterations. 150 repeated runs are conducted for each algorithm. The trajectories are visualized in Fig.~\ref{figSIS1000}. Performance benchmarking is summarized in Table~\ref{tableSISPerf}, showing that NRM is both statistically exact \textbf{and} performant. The row ``$\Delta\sim\Phi$'' stands for the time spent generating times to next event, and the row ``$K\sim\Pi$'' stands for the time spent choosing the event emitter. 

The second setup simulates an SIR epidemic with the same interevent distributions on a graph with 10000 nodes, and the simulation continues until there is no infected node left. The trajectories are visualized in Fig.~\ref{figsir}. Performance benchmarking is summarized in Table~\ref{tableSIRPerf}.

\begin{table}[H]
	\centering
	\caption{\textbf{Performance benchmarking: SIS epidemic + 1000 nodes.}}
	\begin{tabular}{|c|c|c|c|}
	  \hline
	  & nMGA-Exact      & nMGA-Approx   & NRM            \\ \hline
	Total & $\sim$26 mins   & $\sim$65 secs    & $\sim$0.8 secs \\ 
	\hline
	$\Delta\sim\Phi$ & $\sim$25 mins & $\sim$30 secs & \diagbox[width=6em]{}{} \\ 
	\hline
	$K\sim\Pi$  & $\sim$30 secs   & $\sim$30 secs & \diagbox[width=6em]{}{} \\ 
	\hline
	\end{tabular}
	\label{tableSISPerf}
\end{table}

\begin{table}[H]
	\centering
	\caption{\textbf{Performance benchmarking: SIR epidemic + 10000 nodes.}}
	\begin{tabular}{|c|c|c|c|}
	  \hline
	  & nMGA-Exact      & nMGA-Approx   & NRM            \\ \hline
	Total & $\sim$80 mins   & $\sim$150 secs    & $\sim$0.8 secs \\ 
	\hline
	$\Delta\sim\Phi$ & $\sim$80 mins & $\sim$75 secs & \diagbox[width=6em]{}{} \\ 
	\hline
	$K\sim\Pi$  & $\sim$50 secs   & $\sim$75 secs & \diagbox[width=6em]{}{} \\ 
	\hline
	\end{tabular}
	\label{tableSIRPerf}
\end{table}

\begin{figure}[H]
  \centering
  \includegraphics[scale=0.7]{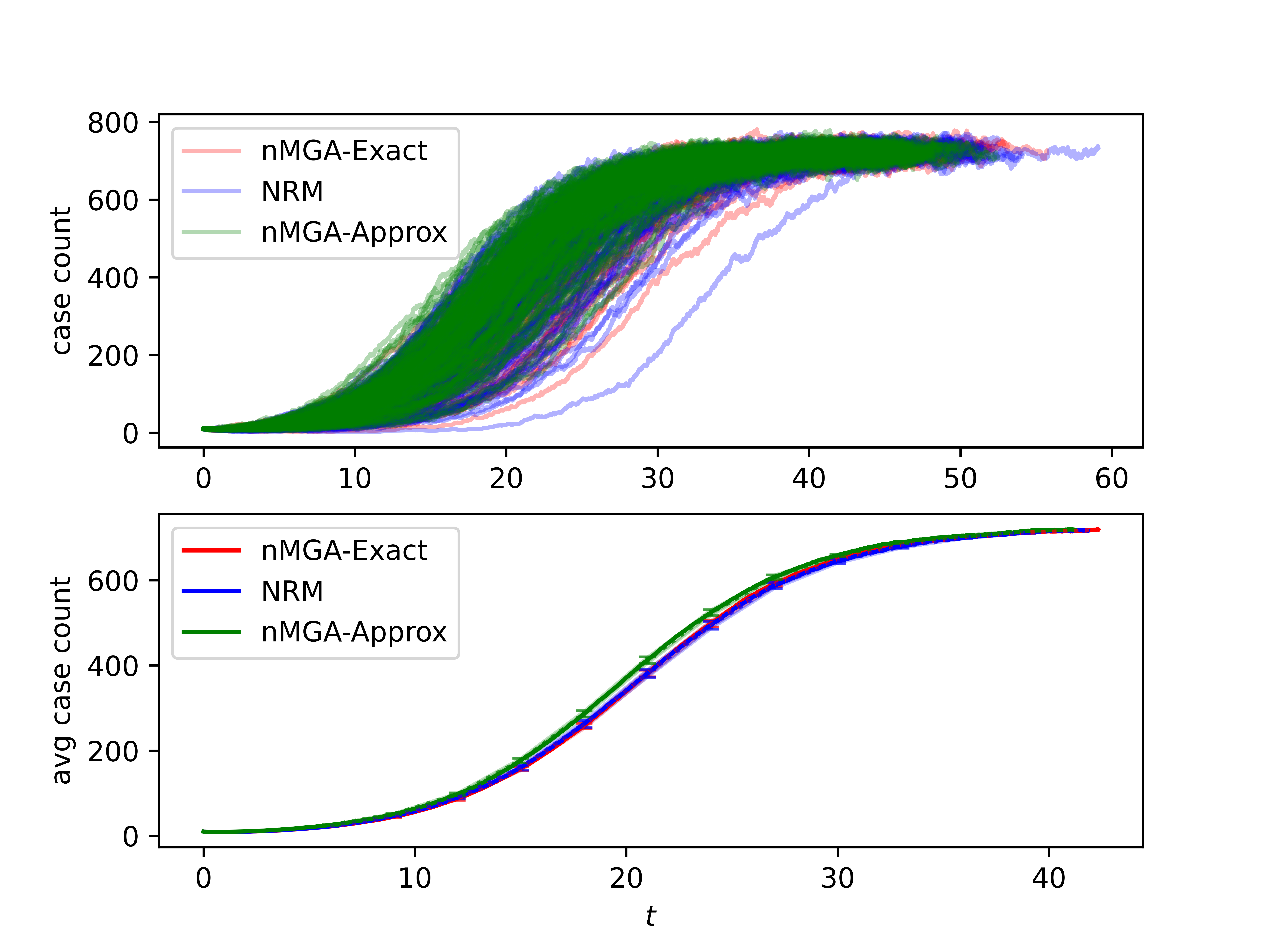} 
  \caption{{\bf Trajectories and average trajectories: SIS epidemic + 1000 nodes.} The x-axis is time and y-axis is the number of infected nodes in the network. In the lower half of Fig.~\ref{figSIS1000}, the green curve standing for the average trajectory of nMGA-Approx diverges from the other two, which is expected from an approximate algorithm. Meanwhile, the average trajectories of NRM and nMGA-Exact are indistinguishable, which is also expected since the two are statistically equivalent. }
  \label{figSIS1000}
\end{figure}

\begin{figure}[H]
	\centering
	\includegraphics[scale=0.7]{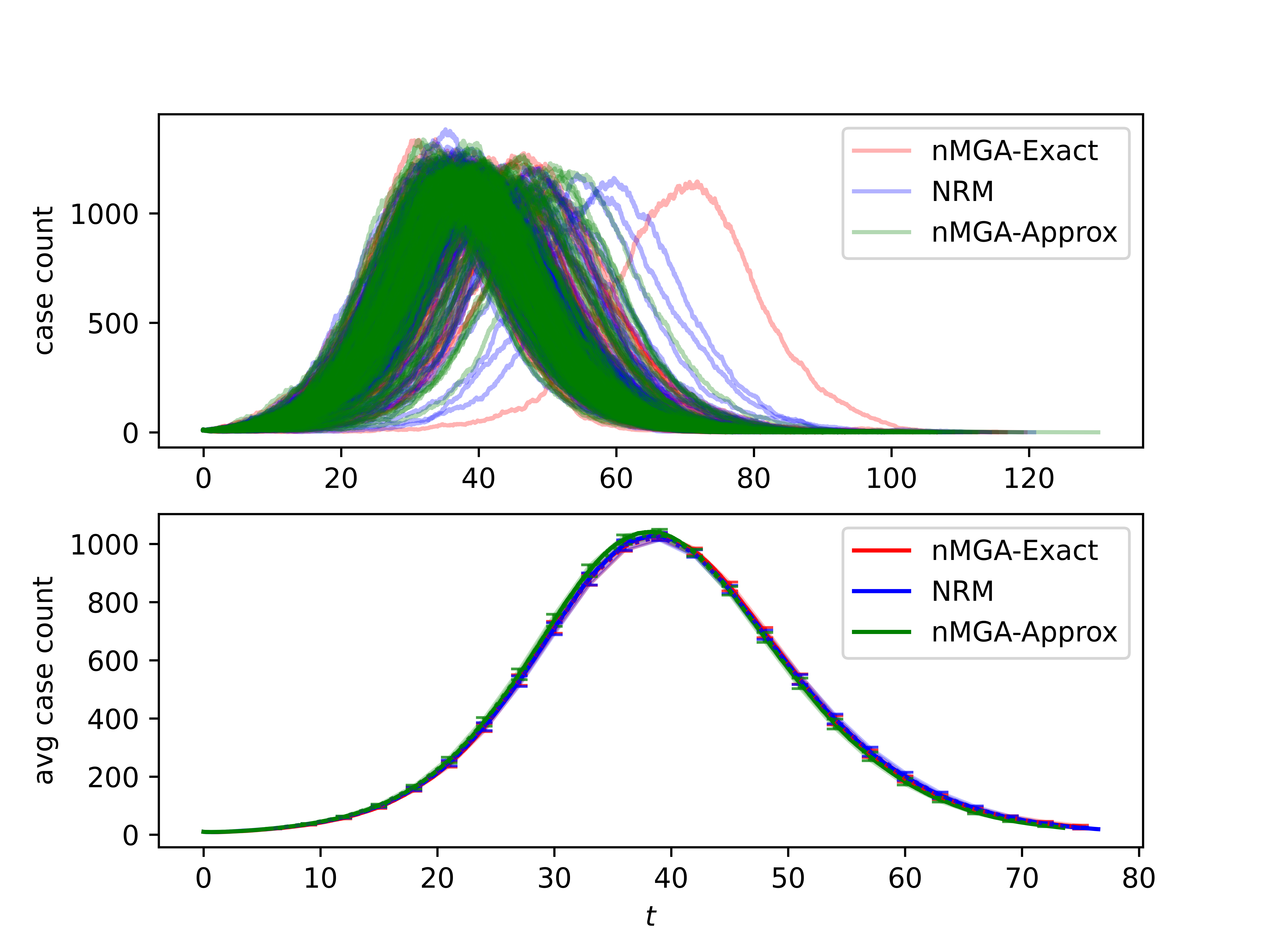} 
	\caption{{\bf Trajectories and average trajectories: SIR epidemic + 10000 nodes.} We observe that NRM~(blue) and nMGA-Exact~(red) are once again indistinguishable, while nMGA-Approx~(green) predicts a slightly earlier peak. }
	\label{figsir}
\end{figure}

\subsection{Markovian Simulations}
In this section, we conduct numerical simulations on different graph sizes using the three algorithms for Markovian epidemics with homogeneous infection rates: node-centric GA~(\ref{alg:nodecentric}), NRM~(\ref{alg:nrm}), and optimized GA~(\ref{alg:OptGA}).
To study how the three algorithms scale, we use ER graphs of different sizes, each with 5\% initially infected nodes. For each graph size, 20 runs of each algorithm are carried out to measure average time consumption. For each run, the simulation continues until a 30-day horizon in the simulated world is met. Per-edge infection time and recovery time are configured to be
\[T_\text{inf}\sim \text{Exp}(0.125),\;\;\;\; T_\text{rec}\sim \text{Exp}(0.25).\]

Note that for Markovian epidemics, when a horizon is given, the number of iterations needed to reach said horizon scales linearly with the size of the graph. Suppose the given horizon is $H$ and we have $N$ self-renewing event emitters with rate $\lambda$. The time between two consecutive events $T$ follows an exponential distribution with rate $N\lambda$, with $\mathbf{E}[T] = (N\lambda)^{-1}$. The expected number of iterations needed to reach horizon $H$ is then $\frac{H}{\mathbf{E}[T]} = HN\lambda$. 

The performance of the three algorithms is shown in Fig.~\ref{figblip}.

\begin{figure}[H]
  \centering
  \includegraphics[scale=0.60]{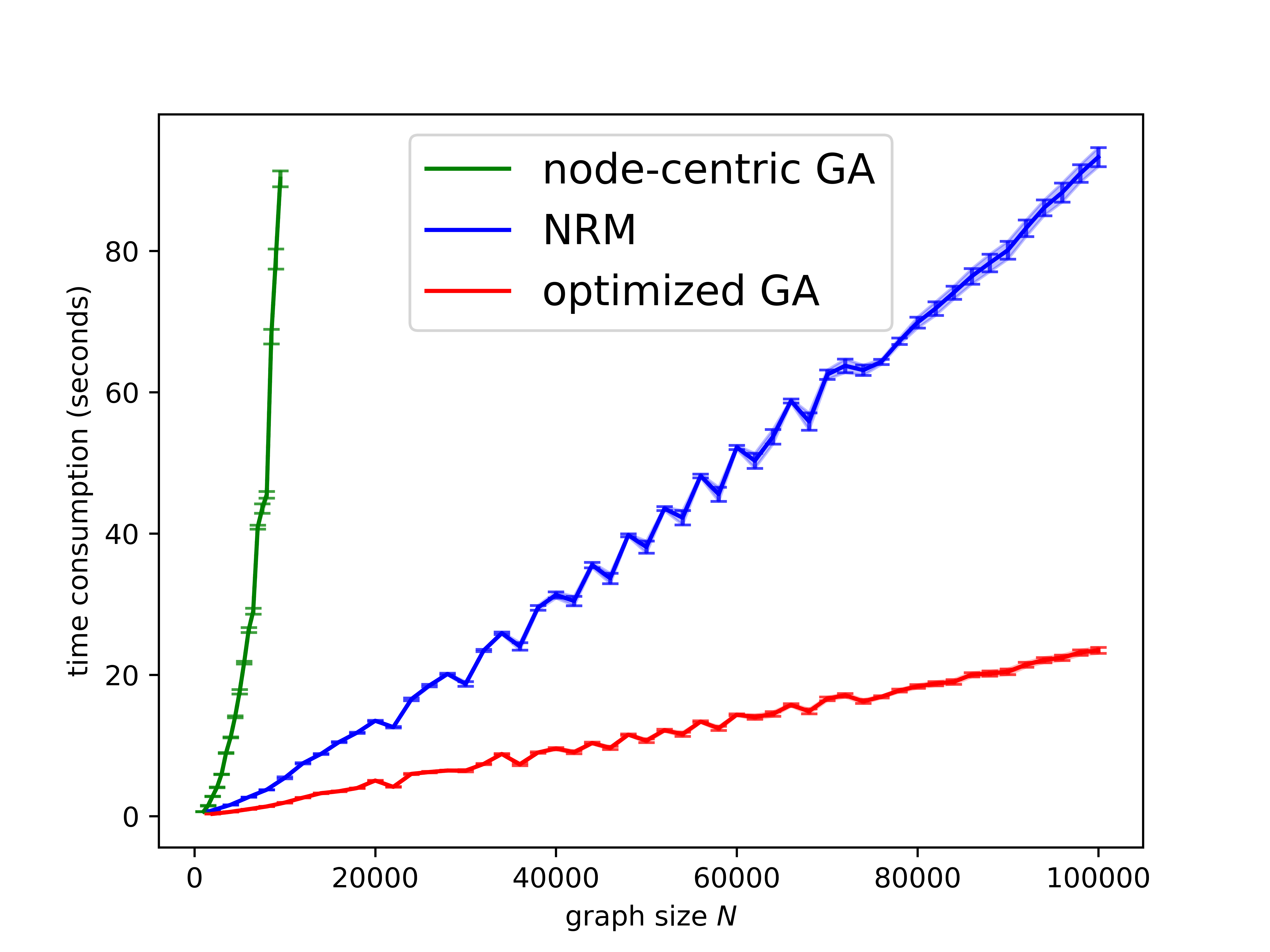} 
  \caption{{\bf Markovian epidemics with homogeneous rates.} As predicted by our complexity analysis, for optimized GA, a per-iteration cost of $\bigO(1)$ implies an overall cost of $\bigO(N)$. Meanwhile, NRM is around 3x-4x slower for the graph sizes considered. Node-centric GA, however, already borders on the intractable for large graph sizes.}
  \label{figblip}
\end{figure}

\subsection{Temporal/Adaptive Networks}
Recently, researchers have shown interest in temporal/adaptive networks that arise from both epidemiology\cite{gross2006epidemic}\cite{wang2019coevolution} and beyond\cite{gross2008adaptive}. New algorithms have been developed for stochastic simulations\cite{Vestergaard2015}. Theoretical efforts have also been made to unify this emerging field\cite{wang2019coevolution}. 

A temporal network refers to a network whose connectivity changes over time. On the one hand, this change in connectivity can be externally driven, in that the variation of network structure is independent from the dynamical processes taking place within the network. For example, the commuting pattern in a contact network is arguably externally driven. Externally driven network changes can also be an artifact of contact tracing/surveillance, if such data is collected periodically. On the other hand, it can also be internally driven, in that the network adapts to the dynamical processes through time. For instance, a susceptible node neighboring an infected one may want to sever the edge between them. 

\subsubsection{An Adaptive Network with Smart Rewiring}
In this section, we adopt a simple rewiring mechanism specified in \cite{gross2006epidemic} to simulate an SIS epidemic on an internally driven adaptive network. Gross et al.\cite{gross2006epidemic} choose discretized time steps for simulations. For each time step, each infected node recovers with probability $r$, each active edge transmits disease with probability $p$, and each active edge gets rewired with probability $w$. The goal of this section is to reproduce some of their results using continuous-time simulation. 

We get started by defining the event emitters. In plain words, rewiring means that a susceptible node disconnects from its infected neighbor, and then reconnect to a randomly chosen \textbf{susceptible} node in the network, while making sure the network remains a simple graph. We define the Rewiring event emitter in Table~\ref{tableSISRewire}.

\begin{table}[H]
  \centering
  \caption{\textbf{Event emitters for smart-rewiring.}}
  \begin{tabular}{|c|c|}
    \hline
       & Rewiring \\ 
       \hline
    ID & Rewire($x\to y$) \\ 
    \hline
    f  & \makecell{edge $(x,y)$ removed;\\ node $z$.state=S randomly chosen; \\ edge $(y,z)$ added;} \\ \hline
    C  & $\emptyset$\\ \hline
    R  & $\{\Inf(x\to y), \Rewire(x\to y)\}$\\ \hline
    \end{tabular}
  \label{tableSISRewire}
\end{table}

We also modify existing Infection and Recovery emitters accordingly. Essentially, an Infection emitter $\Inf(x\to y)$ and a Rewiring emitter with the same signature $\Rewire(x\to y)$ should always be created and removed together. The modified emitters are shown in Table~\ref{tableSISRewireMod}.
\begin{table}[H]
  \centering
  \caption{\textbf{Infection and recovery event emitters for an SIS epidemic with smart-rewiring.}}
  \begin{tabular}{|c|c|c|}
  \hline
             & Infection                          & Recovery                      \\ \hline
  ID         & Inf($x\to y$) & Rec($x$)    \\ \hline
  f          & $y$.state $\gets$ I (Infected)         & $x$.state $\gets$ S (Susceptible) \\ \hline
  C          & \makecell{$\{\Rec(y)\}$\\$\cup$\\$\{\Inf(y\to z), \Rewire(y\to z)$\\ $\;|\; z\in \N(y), z.\text{state=S}\}$} & \makecell{$\{\Inf(z\to x), \Rewire(z\to x)$ \\ $\;|\; z\in \N(x), z\text{.state=I}\}$} \\ \hline
  R          & \makecell{$\{\Inf(z\to y), \Rewire(z\to y)$ \\ $\;|\; z\in \N(y), z\text{.state=I}\}$} & \makecell{$\{\Rec(x)\}$ \\ $\cup$ \\$\{\Inf(x\to z), \Rewire(x\to z)$\\ $\;|\; z\in \N(x),z\text{.state=S}\}$} \\ \hline
  \end{tabular}
  \label{tableSISRewireMod}
\end{table}

To reproduce the results in \cite{gross2006epidemic}, we assume all transitions to be exponential with homogeneous rates, making optimized GA (\ref{alg:OptGA}) applicable. We denote the recovery rate as $\gamma$, per-edge infection rate as $\beta$, and rewiring rate as $\omega$. 

A consequence of this rewiring mechanism is the emergence of a hysteresis loop\cite{gross2006epidemic}. To reproduce this phenomenon, we consider two different initial conditions:
\begin{itemize}
  \item An Erdos-Renyi $G(N, p)$ network with 1\% initially infected nodes.
  \item A heavily rewired network where the epidemic has been established.
\end{itemize} 
To produce the second initial condition, we simply start the simulation from the first one, and save the network state once the system reaches its endemic equilibrium. From the first initial condition, we attempt different infection rates and record the epidemic size at equilibrium in order to find the invasion threshold, i.e., the lowest infection rate required for a new epidemic to pervade the population. From the second one, we do the same to find the persistence threshold, i.e., the lowest infection rate required for an established epidemic to persist. The results are summarized in Fig.~\ref{fighys}.
\begin{figure}[H]
  \centering
  \includegraphics[scale=0.45]{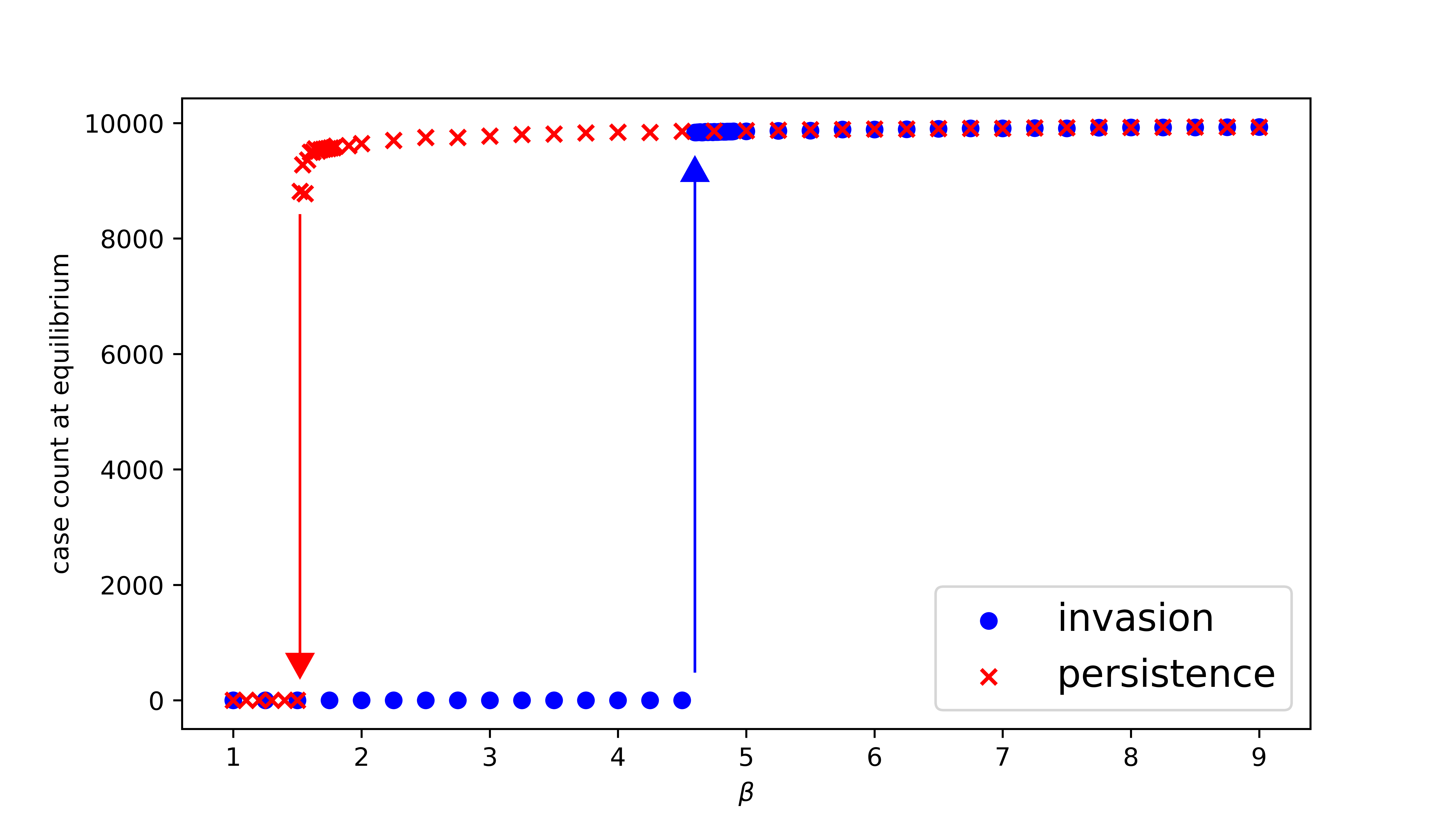} 
  \caption{{\bf Hysteresis loop with rewiring mechanism.} $\gamma=1$, $\omega=100$, $N=10^4$, mean degree$=20$. The blue dots are produced by running the simulation with different infection rates starting from the first initial condition. The red dots are produced by starting from the second initial condition. For both, we record the average case count at equilibrium. The average case count at equilibrium is set as zero if more than 90\% of repeated runs terminate prematurely due to the epidemic dying out before the given number of iterations is finished. Otherwise, the average is computed as the average of all surviving runs. Invasion threshold is around 4.6. Persistence threshold is around 1.52. These numbers are close to the results in \cite{gross2006epidemic} after rescaling.}
  \label{fighys}
\end{figure}

According to \cite{gross2006epidemic}, with high rewiring rate (say, with $\omega=300$), if we start from the second initial condition, as $\beta$ decreases, the original equilibrium loses its stability and the system enters a stable limit cycle. This phenomenon is known as a Hopf bifurcation\cite{strogatz2018nonlinear}. We manage to produce this phenomenon and illustrate it in Fig.~\ref{figHopf}. 
\begin{figure}[!h]
  \centering
  \includegraphics[scale=0.26]{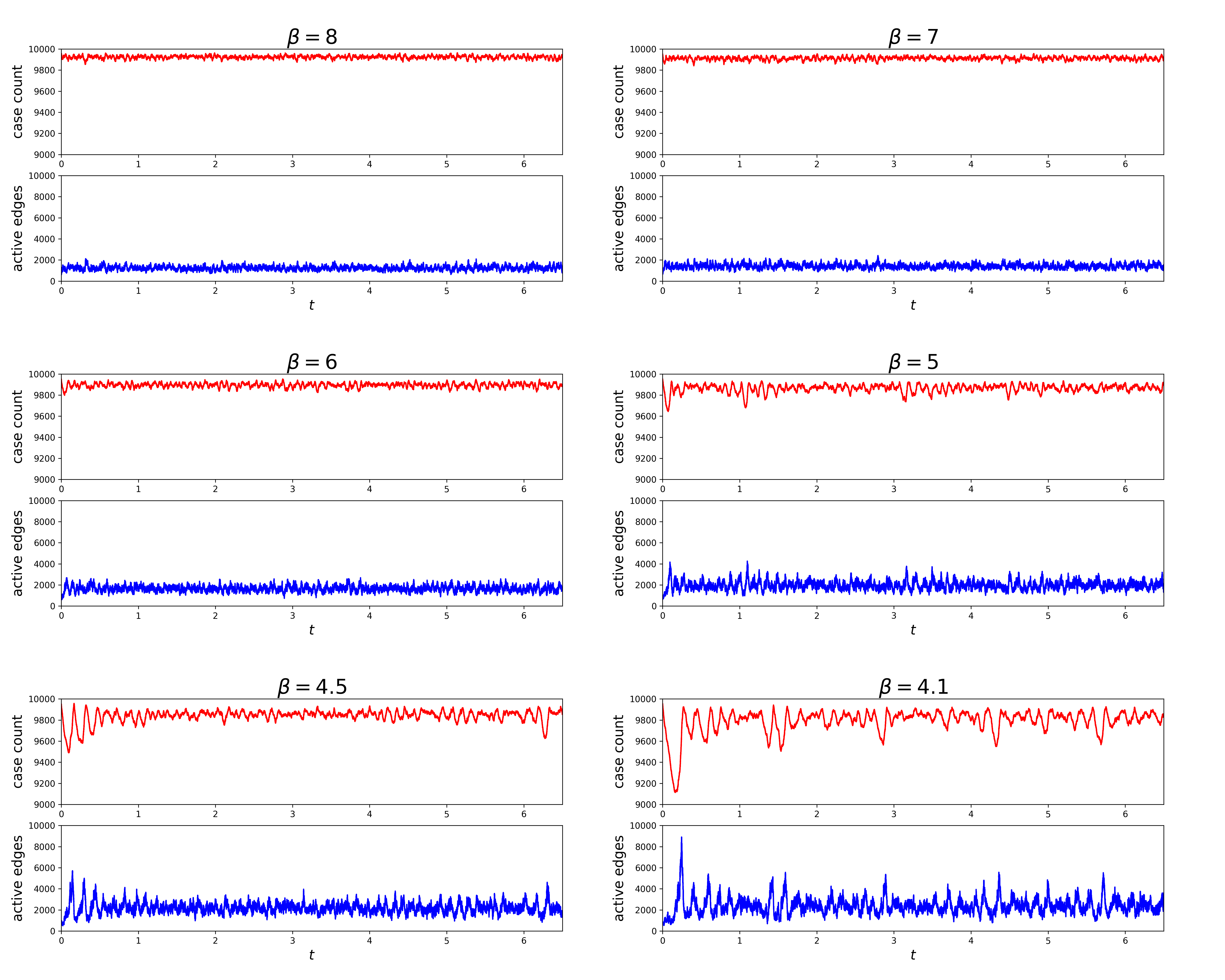}
  \caption{\textbf{Hopf bifurcation with high rewiring rate.} All trajectories start from the same heavily rewired, disease-ridden network with $\gamma=1, \beta=16, \omega=300$. Persistence threshold is around 4.1. With the rewiring rate fixed at 300, the oscillatory behavior becomes more and more outstanding as the infection rate $\beta$ decreases. }
  \label{figHopf}
\end{figure}

\subsubsection{A Temporal Network with Commuting}
In this section, we simulate an SIS epidemic on a externally driven temporal network, that is, the commuting pattern between home and school. The model we propose is inspired by the hybrid approach adopted by many schools and workplaces during the COVID-19 pandemic, where only a fraction of the personnel may be on-site each day, while the rest have to conduct their study/work at home. We carry out two sets of experiments. We first measure the performance of three algorithms (NRM~(\ref{alg:nrm}), node-centric GA~(\ref{alg:nodecentric}), optimized GA~(\ref{alg:OptGA})) on graphs of different sizes. We then explore the effect of changing attendance probability (the fraction of people allowed on-site). For both experiments, we set the per-edge infection rate to be 0.03 and the recovery rate to be 0.25.

We generate a graph for ``school'' and a graph for ``home''. Both graphs are held fixed once generated. When a node leaves one network for the other one, the node will be marked as ``off'' in the first network and marked as active with its previous state~(infected/susceptible) in the other. A node marked as ``off'' in a network does not participate in the epidemic process on that network and is effectively invisible to its neighbors. This turn-off mechanism allows us to move nodes from one network to the other without altering the underlying graph data structure, leading to a more efficient implementation. Without loss of generality, all nodes are assumed to be at home when the simulation starts. After 12 hours, each node at home leaves for school with the aforementioned attendance probability, enabling the epidemic process to continue on what are effectively two subgraphs of the two graphs. After another 12 hours, all nodes at school leave for home, completing the cycle. Apart from the commuting pattern, the simulation works just like a vanilla SIS epidemic.

Implementation-wise, NRM requires minimal adaptation, since the commuting pattern can be implemented as two event emitters ``Home2School'' and ``School2Home'' that renew each other, with interevent distributions being Dirac deltas. For node-centric GA and optimized GA, we draw inspiration from \cite{Vestergaard2015} for their adaptations. In each iteration, we generate the time to next event in the classic Gillespie way. Then we compare the time to next event to the next commuting time. If the time to next event is smaller, the event, be it an infection or a recovery, will occur. Otherwise, the event will be rejected and the commuting will occur. 

For the first experiment, we select uniformly at random 5\% of nodes to be initially infected. We run each algorithm to the end of a 60-day window in the simulated world with attendance probability being 1/3. The ``home'' network is assumed to be an ER graph with mean degree 10 and the ``school'' network is assumed to be an ER graph with mean degree 100. The average time consumption is computed as the average of 5 repeated runs. The results from the first experiment are summarized in Fig.~\ref{figPerfCommute}. 

\begin{figure}[H]
  \centering
  \includegraphics[scale=0.6]{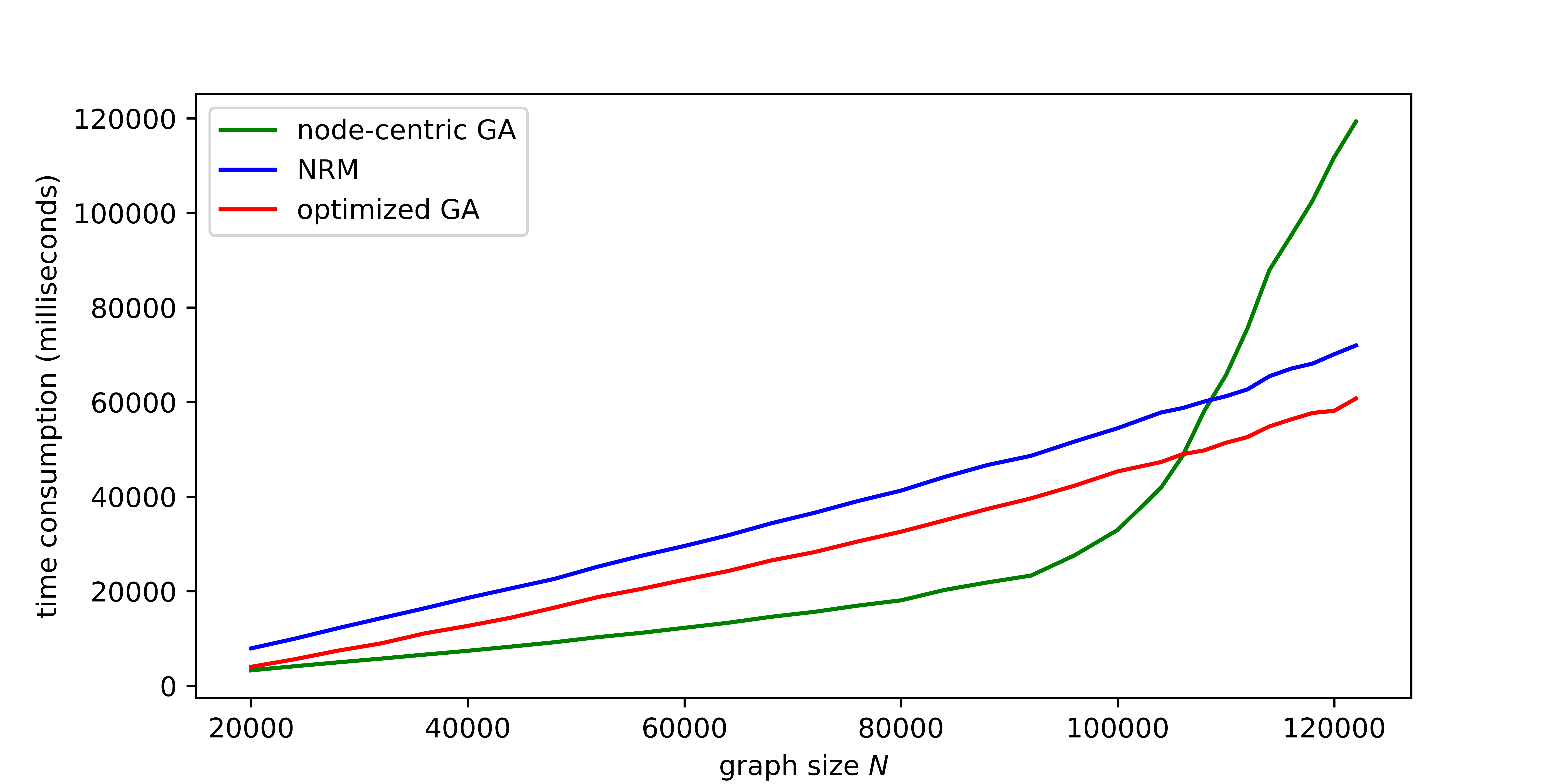}
  \caption{\textbf{Performance of the three algorithms for Markovian epidemics with homogeneous rates on graphs of different sizes.} We see NRM and optimized GA have comparable performance, although their advantage over node-centric GA is apparent only after the graph size exceeds 100000.}
  \label{figPerfCommute}
\end{figure}

To better understand the performance of the three algorithms, we also conduct instruction counting (counting the number of instructions executed by the CPU) and cache profiling (counting the number of cache hits/misses) with callgrind. The results are summarized in Fig.~\ref{figCallgrind} and Table~\ref{tableSecantSlope}.
\begin{figure}[H]
  \centering
  \includegraphics[scale=0.5]{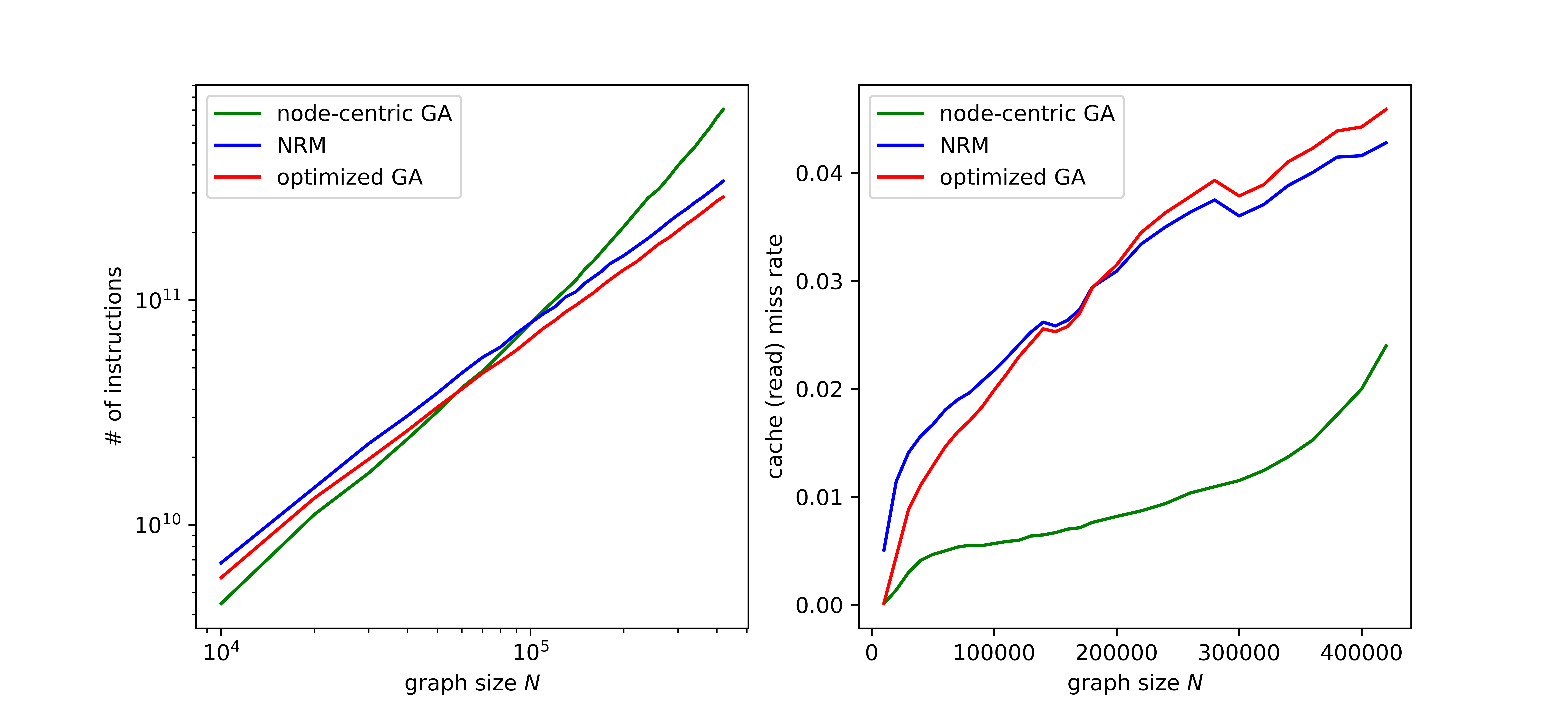}
  \caption{\textbf{Instruction counting and cache profiling.} The one to the left is a log-log plot, whose slope characterizes the complexity of the algorithm and is summarized in Table~\ref{tableSecantSlope}. The one to the right plots cache read miss rates against graph sizes, showing that as the size of the problem grows, the cache performance also deteriorates. NRM and optimized GA have worse cache performance due to the inevitable indirections in their implementation, but they eventually outperform node-centric GA despite being less cache-friendly.}
  \label{figCallgrind}
\end{figure}

\begin{table}[H]
  \centering
  \caption{\textbf{Slope of the secant line in the log-log plot (Fig.~\ref{figCallgrind}).}}
  \begin{tabular}{|c|c|c|c|}
  \hline
                  & first half & second half & overall \\ \hline
  Node-centric GA~(\ref{alg:nodecentric}) & 1.2988     & 1.6202      & 1.3544  \\ \hline
  NRM~(\ref{alg:nrm})             & 1.0481     & 1.0379      & 1.0464  \\ \hline
  Optimized GA~(\ref{alg:OptGA})    & 1.0465     & 1.0286      & 1.0434  \\ \hline
  \end{tabular}
  \begin{flushleft}
    Table notes: NRM~(\ref{alg:nrm}) and optimized GA~(\ref{alg:OptGA}) look rather linear, possibly because the problem size is still not large enough for the $\log N$ term in NRM to make a difference. The secant slope of node-centric GA goes from 1.2988 to 1.6202, which aligns with our previous discussion that its asymptotic slope is 2, since its quadratic part eventually prevails.
  \end{flushleft}
  \label{tableSecantSlope}
\end{table}

For the second experiment, the population size is fixed at 10000. The ``home'' network is assumed to be a stochastic block model with block size 5 with 1\% initially infected nodes, while the ``school'' network is assumed to be a stochastic block model with block size 50. For both networks, the edge probability $p_{ij}$ is zero if $i, j$ belong to different blocks and one if $i, j$ belong to the same block. This construction suggests that commuting serves as a form of long-distance connection. Each average trajectory is computed as the average of 10 repeated runs. The results from the second experiment are summarized in Fig.\ref{figAttendance}. 
\begin{figure}[H]
  \centering
  \includegraphics[scale=0.6]{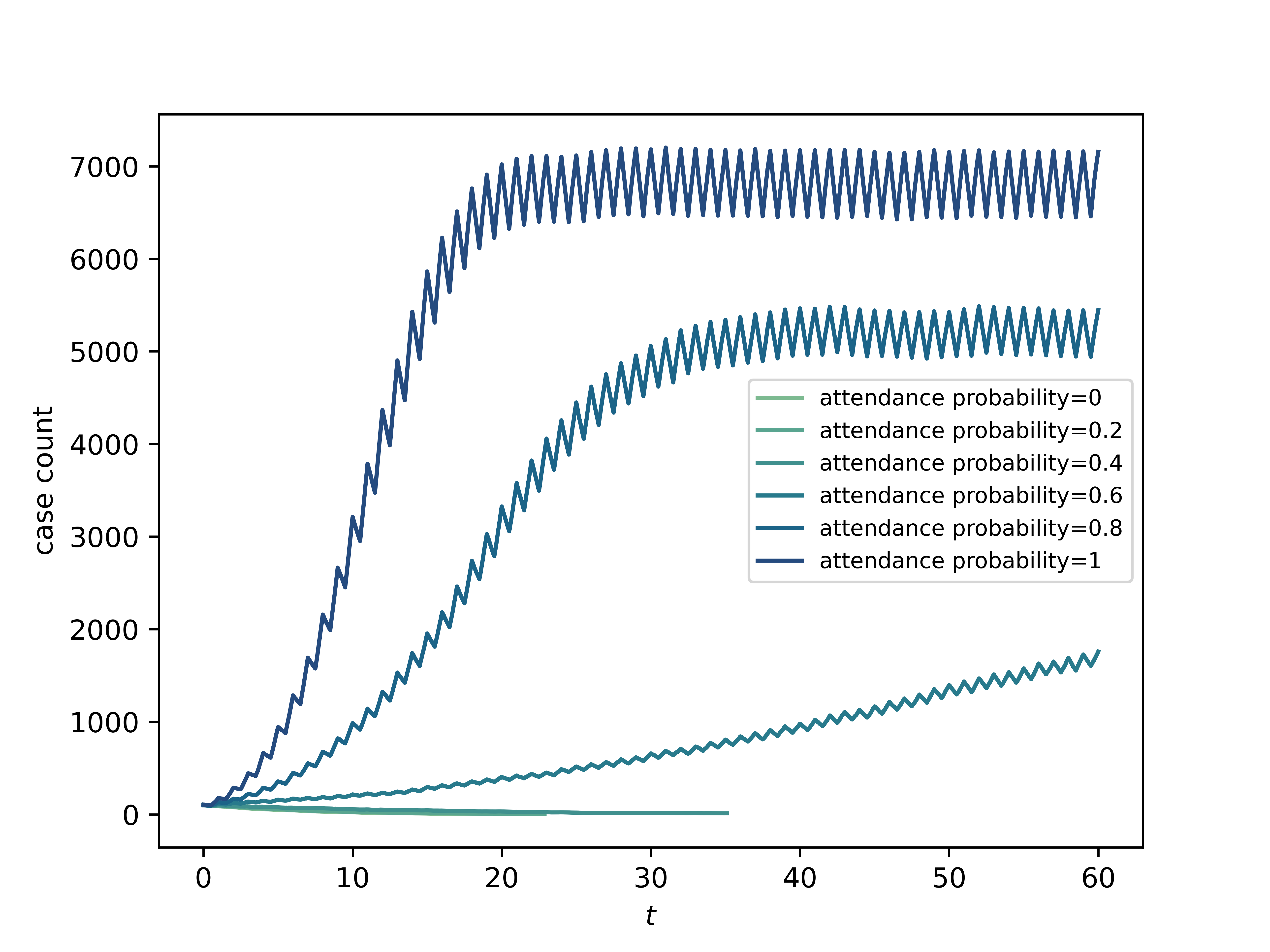}
  \caption{\textbf{Average epidemic trajectories with different attendance probabilities.} When attendance probability is 0 or 0.2, the infection rate is below the epidemic threshold. We see that, in this specific network setup, a higher attendance probability leads to a lower epidemic threshold and a larger epidemic prevalence. A higher attendance probability also gives rise to higher variance in the trajectories since the process of determining which nodes should attend the ``school'' network is randomized.}
  \label{figAttendance}
\end{figure}

\subsection{Cooperative Infections}
Up until now, we have assumed infections through edges to be independent. In this section, we review a scenario with cooperative infections mentioned in \cite{Boguna2014}, where the total rate of being infected on node $x$ is expressed as 
\[\lambda_{x, \text{total}}(t) = \left[\sum_{y\in\N(x), y.\text{state=I}} (\lambda_y(t))^{\frac{1}{\sigma}}\right]^\sigma,\]
where $\N(x)$ is the 1-hop neighborhood of vertex $x$. 

Cooperative infections are introduced in \cite{Boguna2014}. This scenario is interesting because nMGA-Approx~(\ref{alg:nMGAApprox}) can be readily applied, while it is not obvious how to apply NRM. Here in this section we show that by redesigning event emitters and generating tentative times given hazard functions, not only can we apply NRM to this scenario but we can make NRM outperform nMGA-Approx, as well. 

When $\sigma=1$, we recover the classical scenario where infections coming from neighbors are independent. Otherwise, it is nontrivial to generate interevent times when only the hazard function, in this case $\lambda_{x, \text{total}}(t)$, is specified. One approach is to utilize Eq.~\ref{eqn:hazardAndPDF}, which allows us to use the inverse CDF technique by (i) sampling $u\sim\text{Uniform}(0, 1)$ and then (ii) solve $\int^t_0 d\tau\lambda(\tau) = -\ln(u)$ for $t$. To evaluate the integral on the left-hand side, one can use a Newton-Cotes quadrature. A naive implementation is shown in Alg.~\ref{alg:newtonquadnaive}. Note that even though the midpoint rule, which has leading error term $\dt^2$, is used in Alg.~\ref{alg:newtonquadnaive}, possible solutions only take values from $\{\frac{\dt}{2} + k\dt\cond k=0,1,2\dots\}$, making the leading error term $\dt$. 
\begin{algorithm}
  \caption{\textbf{Generate interevent times given hazard function - Naive}}\label{alg:newtonquadnaive}
  \begin{algorithmic}
    \State{Generate $u\sim$ Uniform(0, 1)} 
    \State{$t\gets \dt/2$}
    \State{sum $\gets 0$}
  \While{sum $< -\ln u$}
      \State{sum $\gets \mbox{sum} + \dt\cdot \lambda(t)$}
      \State{$t\gets t + \dt$}
    \EndWhile
    \State{return $t$}
  \end{algorithmic}
\end{algorithm}

A more refined approach would involve interpolating between the left and right bound of the last interval. To be more precise, with $t_i\coloneqq \frac{\dt}{2} + i\dt$, we continue to compute the Riemann sum to iteration $n$ until
\[\sum_{i=0}^n \lambda(t_i)\dt < -\ln u < \sum_{i=0}^{n+1} \lambda(t_i)\dt, \]
meaning that the actual solution of $t$ is somewhere between $(n+1)\dt$ and $(n+2)\dt$. 

To find $t$, we adopt a simple trapezoidal interpolation scheme, illustrated in Fig~\ref{figTrapezoid}. This approach guarantees that the root-finding $\int^t_0 d\tau\lambda(\tau) = -\ln(u)$, which involves both the midpoint rule and the trapezoidal interpolation, is exactly solved if $\lambda(t)$ is linear, making the leading error term $\dt^2$. The refined algorithm is shown in Alg.~\ref{alg:newtonquadrefined}.
\begin{algorithm}
  \caption{\textbf{Generate interevent times given hazard function - Refined}}\label{alg:newtonquadrefined}
  \begin{algorithmic}
    \State{Generate $u\sim$ Uniform(0, 1)} 
    \State{$t\gets \dt/2$}
    \State{sum $\gets 0$}
    \While{true}
      \State{new\_sum $\gets \mbox{sum} + \dt\cdot \lambda(t)$}
      \If{new\_sum $< -\ln u$}
        \State{sum $\gets$ new\_sum}
        \State{$t\gets t + \dt$}
      \Else{ break}
      \EndIf
    \EndWhile
    \State{lo $\gets t-\dt/2$; $h_1\gets\lambda(\text{lo})$}
    \State{hi $\gets t+\dt/2$; $h_2\gets\lambda(\text{hi})$}
    \State{$D\gets -\ln u - \text{sum}$; $k\gets (h_2-h_1)/\dt$}
    \State{return $t + \frac{-h_1 + \sqrt{h_1^2 + 2Dk}}{k}$}
  \end{algorithmic}
\end{algorithm}

\begin{figure}[H]
  \centering
  \includegraphics[scale=0.20]{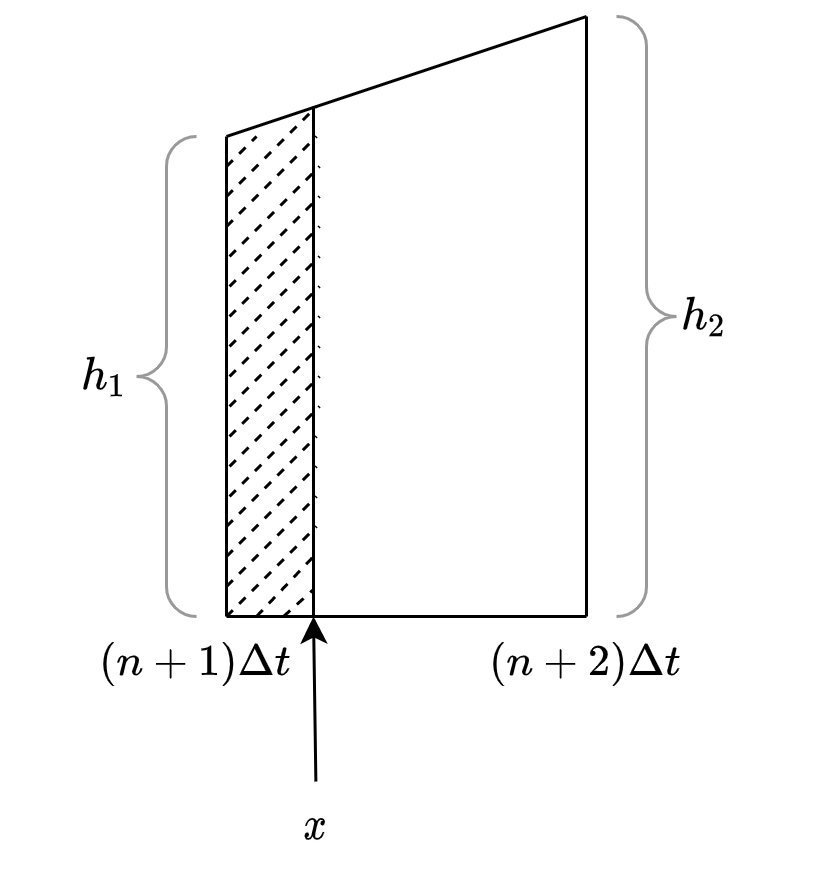} 
  \caption{{\bf Trapezoidal interpolation illustrated.} Let $D\coloneqq -\ln u - \sum_{i=0}^n \lambda(t_i)\dt > 0$, $h_1\coloneqq \lambda((n+1)\dt)$, $h_2\coloneqq \lambda((n+2)\dt)$. The problem is reduced to finding $x$ between $(n+1)\dt$ and $(n+2)\dt$ such that the shaded area is $D$. Let $k\coloneqq (h_2-h_1)/\dt$. The correct value of $x$ is thus $\frac{-h_1+\sqrt{h_1^2 + 2Dk}}{k}$.}
  \label{figTrapezoid}
\end{figure}

\noindent We also present the event emitter specification for cooperative infection in Table~\ref{tableCooperative}.
\begin{table}[H]
  \centering
  \caption{\textbf{Event emitters for cooperative infection~(SIS).}}
  \begin{tabular}{|c|c|c|}
    \hline
               & Infection                          & Recovery                      \\ \hline
    ID         & Inf($y$)                      & Rec($x$)                      \\ \hline
    f          & $y$.state $\gets$ I         & $x$.state $\gets$ S \\ \hline
    C          & \makecell{$\{\text{Rec}(y)\}$\\$\cup$\\$\{\text{Inf}(z) \;|\; z\in \N(y), z.\text{state=S}\}$} 
               & \makecell{\{Inf($x$)\}\\$\cup$\\$\{\text{Inf}(z) \;|\; z\in \N(x), z.\text{state=S}\}$} \\ \hline
    R          & \makecell{\{Inf($y$)\}\\$\cup$\\$\{\text{Inf}(z) \;|\; z\in \N(y), z.\text{state=S}\}$} 
               & \makecell{$\{\text{Rec}(x)\}$ \\ $\cup$ \\$\{\text{Inf}(z) \;|\; z\in \N(x), z.\text{state=S}\}$} \\ \hline
  \end{tabular}
  \begin{flushleft}
    Table notes: Whenever an event~(Infection/Recovery) is executed on a node $x$, for each of its susceptible neighbor $y$, the infection event emitter Inf($y$) is replaced, because the hazard function of $y$'s neighborhood has changed.
  \end{flushleft}  
  \label{tableCooperative}
\end{table}

We run simulations according to Algorithm~\ref{alg:newtonquadrefined} and Table~\ref{tableCooperative} on ER graphs of different sizes, each with a mean degree of 10 and 5\% initially infected nodes.  The results are shown in Fig.~\ref{figCooperativeTraj}. 

\begin{figure}[H]
  \centering
  \includegraphics[scale=0.40]{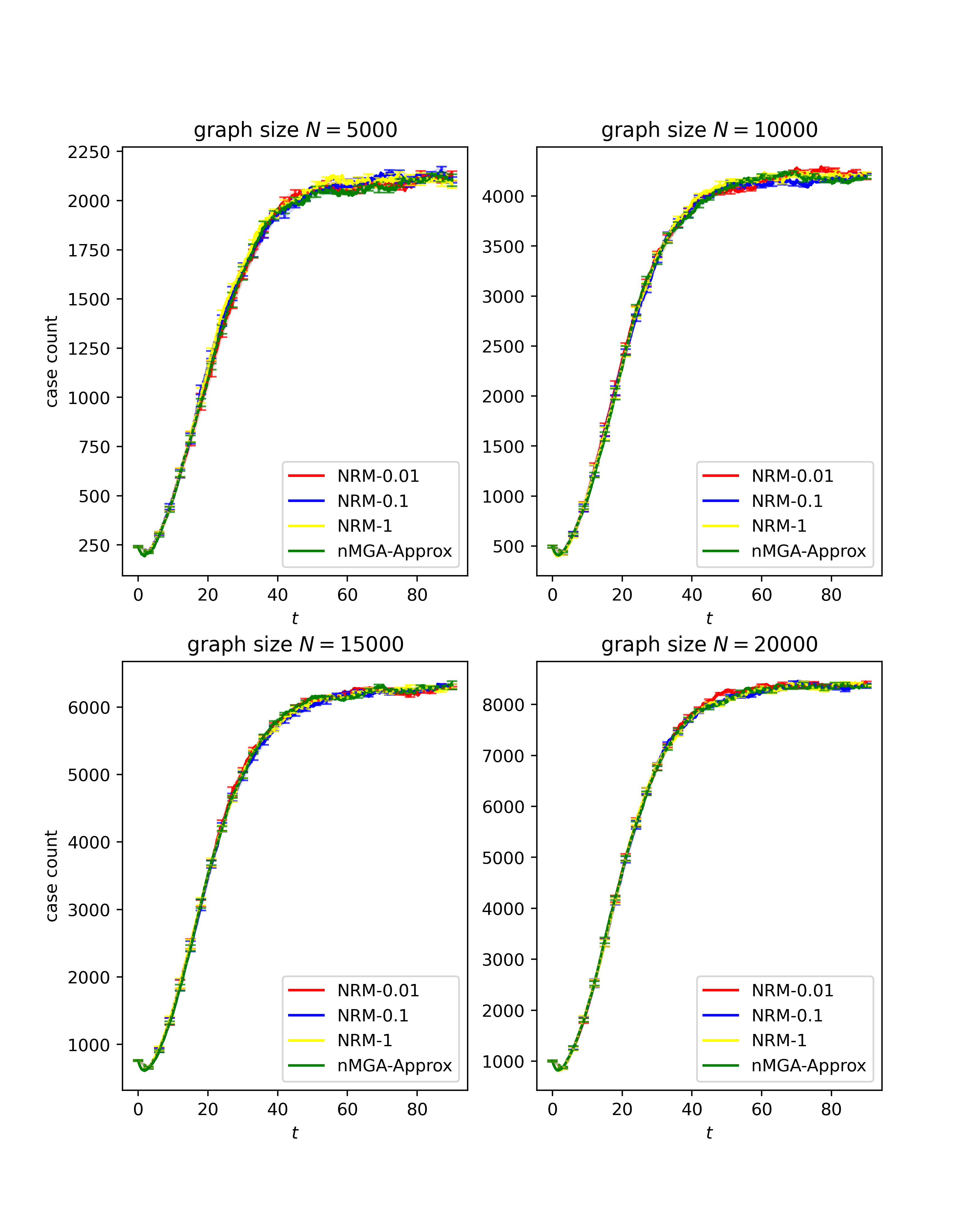} 
  \caption{{\bf Trajectories on graphs with different sizes.} Each trajectory is computed as the average of 5 repeated runs. The infection interevent distribution is configured to be Weibull($\kappa=2, \lambda=10)$. The recovery distribution is Exp$(0.25)$. The cooperation parameter $\sigma$ is $0.5$. When an algorithm is termed ``NRM-$\dt$'', it means it runs NRM with step size $\dt$.}
  \label{figCooperativeTraj}
\end{figure}

The algorithms in Fig.~\ref{figCooperativeTraj} exhibit indistinguishable trajectories. The performance, however, differs greatly across algorithms. It takes nMGA-Approx around 5 minutes to run on graph size 10000 and around 25 minutes to run on graph size 20000, while NRM-0.01 costs 4.3 seconds on graph size 10000 and 8.8 seconds on graph size 20000. As $\dt$ gets larger, the time consumption gets lower, as NRM-0.1 and NRM-1 cost far less than 1 second on all graph sizes considered.

The reasoning behind the performance difference is that with a leading error term of $\dt^2$, NRM achieves the same level of accuracy with fewer number of rate function evaluations. Notably, the midpoint rule is just one of the many quadrature rules that can be applied here. It is possible to utilize other quadrature rules with smaller leading error term, as long as the interpolation step has the same leading error term as the numerical integration. 

\section{Discussion}
We have reviewed simulation algorithms for non-Markovian and Markovian epidemics. We have learned that
\begin{itemize}
    \item NRM~(\ref{alg:nrm}) is the best choice when the simulation task is non-Markovian or Markovian with heterogeneous rates.
    \item Optimized GA~(\ref{alg:OptGA}) is the best choice when the simulation task is Markovian with homogeneous rates.
    \item Through properly designed event emitters, both NRM and optimized GA can be adapted for more involved scenarios, such as time-varying networks and cooperative infections. 
    \item It is possible for node-centric GA~(\ref{alg:nodecentric}) to out-perform other algorithms when the graph size is small (as shown in Fig.~\ref{figPerfCommute}), in spite of its sub-optimal asymptotic behavior (Fig.~\ref{figPerfCommute}). 
\end{itemize}

Simulations with event emitters are not without limitations. Firstly, one must clearly identify the independent components in the simulation task. For instance, in the scenario of cooperative infection, infections from different infected neighbors to the same susceptible node are not independent events when $\sigma\neq 1$, which is why the edge-wise infection emitter $\Inf(x, y)$ cannot be used anymore. In the mean time, infections of different susceptible nodes are independent events, which is why we have $\Inf(x)$ instead. Nevertheless, we believe that this forces practitioners to be explicit about their assumptions on independence, even though it does complicate the design process in the early stage. Secondly, it is not always possible to make the number of created/removed event emitters $\bigO(1)$. For instance, consider simulating an SIS epidemic on a complete graph. If the number of created/removed event emitters is $\bigO(N)$, then the per-iteration cost is at best $\bigO(N)$. 

The technique of grouping event emitters also presents a trade-off between space and time complexity. Suppose the graph has $V$ vertices and $E$ edges. For node-centric GA, the space complexity is $\bigO(V)$ since rates are computed and stored for each node. For optimized GA, the space complexity is $\bigO(V+E)$ since we have to maintain a hashset of all active edges. 

\section{Supporting information}

\paragraph*{S1 Appendix.}
\label{S1_Appendix}
The bookkeeping of elapsed times is, in fact, a design choice for non-Markovian simulations. To see why, consider the scenario in Fig.~\ref{figElapsed}.
\begin{figure}[H]
  \centering
  \includegraphics[scale=0.23]{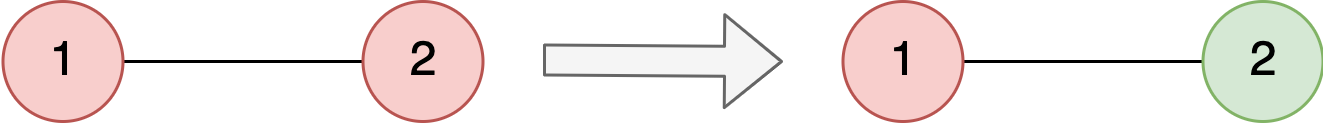}
  \caption{{\bf Two different rules for elapsed time bookkeeping.}}
  \label{figElapsed}
\end{figure}
With the recovery of 2, the event emitter $\Inf(1\to 2)$ needs to be introduced. However, the time to infection may adopt either of the following rules:
\[\begin{cases}
  T_\text{inf} \sim \psi(\cdot\cond 0)  \;\;&\text{rule 1}\\
  T_\text{inf} \sim \psi(\cdot\cond t_1)  \;\;&\text{rule 2,}
\end{cases}\]
where $t_1$ is the duration in which node 1 has been ill. Rule 1 is easier to implement while rule 2 is closer to reality, especially if we accept that the root cause of non-Markovian infection is the development of symptoms through time, which results in varying infectivity. Fortunately, rule 2 only requires slight modifications to Table~\ref{tableSIS}.
\begin{table}[H]
  \centering
  \caption{\textbf{Inf($x\to y$) for an SIS epidemic on a network, with rule 2.}}
  \begin{tabular}{|c|c|}
  \hline
             & Infection                          \\ \hline
  ID         & Inf($x\to y$)                      \\ \hline
  f          & \makecell{$y$.state $\gets$ I \\ $y$.infected\_since $\gets t_\text{exe}$ }      \\ \hline
  C          & \makecell{$\{\text{Rec}(y)\}$\\$\cup$ \\ $\{\text{Inf}(y\to z) \;|\; z\in \N(y), z.\text{state=S}\}$} \\ \hline
  R          & \makecell{$\{\text{Inf}(z\to y)\;|\; z\in \N(y), z\text{.state=I}\}$} \\ \hline
  $\Psi$     & $\Psi(\cdot\cond t_\text{intro} - x.\text{infected\_since})$    \\ \hline
  \end{tabular}
  \begin{flushleft} 
    $t_\text{exe}$: time when $\Inf(x\to y)$ executes (when $y$ gets infected by $x$).\\ 
    $t_\text{intro}$: time when $\Inf(x\to y)$ is created by some other event emitter.
  \end{flushleft}
  \label{tableSISMod}
\end{table}

This extra bookkeeping guarantees that every infected node ``knows'' when it was infected. In practice, this brings minimal changes to the code and can be toggled on/off easily. Thus, rule 2 is adopted for all non-Markovian simulations in this article.

\paragraph*{S2 Appendix.}
\label{S2_Appendix}
\begin{proof}[Proof of Theorem \ref{thm:indep}]
  We prove the theorem by induction on $l$.\\
    \textbf{Base Case} $l=1$\\
    When $l=1$, all tentative times are freshly generated and the theorem holds as a direct consequence of Assumption 1. \\
    \textbf{Inductive Case} $l\geq 2$\\
    Assume for the sake of argument that the theorem holds for all $l'<l$, i.e., $\forall l'<l,$ $\{T^{l'}_i\}_{i\in E^{l'}}$ consists of mutually independent random variables given $\traj{l'-1}$. Particularly, $\{T^{l-1}_i\}_{i\in E^{l-1}}$ are mutually independent given $\traj{l-2}$, thanks to the inductive hypothesis. \\
    Given $\traj{l-1}$, we determine $E^l, \R^l$. If $\R^l=\emptyset$, all tentative times in iteration $l$ are freshly generated and the theorem holds as a direct consequence of Assumption 1. We therefore only consider the non-trivial case where $\R^l\neq\emptyset$.\\
    Let $\J$ be an arbitrary subset of $E^l$, allowing us to define $\J^l\coloneqq\R^l\cap\J, \bar{\J}^l\coloneqq\J\setminus\J^l$. Because of Assumption~\ref{assumption:assumption}, we factorize the joint CDF as
    \begin{align}
      \Pr\{\vec{T}^l_\J\leq \vec{t}_\J\}
    = \left(\prod_{i\in\bar\J^l}\Pr\{T^l_i\leq t_i\cond\traj{l-1}\}\right) \cdot \Pr\{\vec{T}^l_{\J^l}\leq \vec{t}_{\J^l}\cond \traj{l-1}\}. \label{eqn:splitFR}
    \end{align}
    If $\J^l=\emptyset$, i.e. if all tentative times in $\J$ are freshly generated in iteration $l$, the second factor in Eq~(\ref{eqn:splitFR}) disappears and the factorization is already complete. We therefore only consider the non-trivial case where $\J^l\neq\emptyset$ and proceed to factorize the second factor.
    \begin{align}
      &\Pr\{\vec{T}^l_{\J^l}\leq \vec{t}_{\J^l}\cond \traj{l-1}\} = \Pr\{\vec{T}^{l-1}_{\J^l}\leq \vec{t}_{\J^l}\cond \traj{l-1}\} \;\;\;\;\;\; (\J^l \text{ is reused}) \label{eqn:secondTerm}\\
    =& \Pr\{\vec{T}^{l-1}_{\J^l}\leq \vec{t}_{\J^l}\cond \traj{l-2}, K^{l-1}=k^{l-1}, T^{l-1}=t^{l-1}\} \;\;\;\;\;\; (\text{Defn.}~\ref{def:traj}) \nonumber\\
    =& \lim_{\epsilon\to 0}\frac{\Pr\{\vec{T}^{l-1}_{\J^l}\leq \vec{t}_{\J^l}, K^{l-1}=k^{l-1}, \exenbr \cond \traj{l-2}\}}{\Pr\{K^{l-1}=k^{l-1}, \exenbr \cond \traj{l-2}\}}\label{condprobdef}\\
    =& \lim_{\epsilon\to 0}\frac{\Pr\{\vec{T}^{l-1}_{\J^l}\leq \vec{t}_{\J^l};\; \vec{T}^{l-1}_{E^{l-1}\setminus\{k^{l-1}\}} > t^{l-1};\; \tentnbr \cond \traj{l-2}\}}{\Pr\{\vec{T}^{l-1}_{E^{l-1}\setminus\{k^{l-1}\}} > t^{l-1};\; \tentnbr \cond \traj{l-2}\}}\label{rewritingKT}\\
    =& \lim_{\epsilon\to 0}\frac{\Pr\{t^{l-1} < \vec{T}^{l-1}_{\J^l} \leq \vec{t}_{\J^l};\; \vec{T}^{l-1}_{E^{l-1}\setminus(\J^l\cup\{k^{l-1})\}} > t^{l-1};\; \tentnbr \cond\traj{l-2}\}} {\Pr\{\vec{T}^{l-1}_{\J^l} > t^{l-1};\; \vec{T}^{l-1}_{E^{l-1}\setminus(\J^l\cup\{k^{l-1}\})} > t^{l-1};\; \tentnbr \cond\traj{l-2}\}} \label{inductMany}\\
    =&  \lim_{\epsilon\to 0}\frac{\Pr\{t^{l-1} < \vec{T}^{l-1}_{\J^l} \leq \vec{t}_{\J^l};\; \vec{T}^{l-1}_{E^{l-1}\setminus(\J^l\cup\{k^{l-1}\})} > t^{l-1};\; \tentnbr \cond\traj{l-2}\}} {\Pr\{\vec{T}^{l-1}_{\J^l} > t^{l-1};\; \vec{T}^{l-1}_{E^{l-1}\setminus(\J^l\cup\{k^{l-1}\})} > t^{l-1};\; \tentnbr \cond\traj{l-2}\}} \nonumber\\
    =& \frac{\Pr\{t^{l-1} < \vec{T}^{l-1}_{\J^l} \leq \vec{t}_{\J^l}\cond \traj{l-2}\}}{\Pr\{\vec{T}^{l-1}_{\J^l} > t^{l-1}\cond \traj{l-2}\}} \label{eqn:useIH1}\\
    =& \prod_{j\in\J^l} \frac{\Pr\{t^{l-1} < T^{l-1}_j \leq t_j\cond \traj{l-2}\}}{\Pr\{T^{l-1}_j > t^{l-1}\cond \traj{l-2}\}} \label{eqn:useIH2}\\
    =& \prod_{j\in\J^l} \Pr\{T^{l-1}_j\leq t_j\cond T^{l-1}_j > t^{l-1}, \traj{l-2}\},\label{eqn:manyReused}
    \end{align}
    where Eq~(\ref{condprobdef}) is essentially the definition of conditional probability, Eq~(\ref{rewritingKT}) rewrites the event $K^{l-1}=k^{l-1}, \exenbr$, and Eq~(\ref{inductMany}) simply regroups indices. $\lim_{\epsilon\to 0}$ is introduced in Eq~(\ref{condprobdef}) to prevent ``0/0''. By assuming all $T^l_i$ to be continuous random variables, we know this limit exists. Eq~(\ref{eqn:useIH1}) and Eq~(\ref{eqn:useIH2}) make use of the inductive hypothesis. \\
    Since we have assumed that $\R^l\neq\emptyset$, we take $j\in\R^l$ and pick $\J=\J^l=\{j\}$. Since the choice of $\J\subset E^l$ is arbitrary, the reasoning from Eq~(\ref{eqn:secondTerm}) to Eq~(\ref{eqn:manyReused}) still applies, giving us
    \begin{align}
      \Pr\{T^l_j\leq t\cond \traj{l-1}\} = \Pr\{T^{l-1}_j\leq t\cond T^{l-1}_j > t^{l-1}, \traj{l-2}\}.\label{eqn:oneReused}
    \end{align}
    Substitute Eq~(\ref{eqn:oneReused}) back into Eq~(\ref{eqn:manyReused}), yielding
    \begin{align}
      \Pr\{\vec{T}^l_{\J^l}\leq \vec{t}_{\J^l}\cond \traj{l-1}\} = \prod_{j\in\J^l} \Pr\{T^l_j\leq t_j\cond \traj{l-1}\}.\label{eqn:reuseFactorized}
    \end{align}
    Plug Eq~(\ref{eqn:reuseFactorized}) into Eq~(\ref{eqn:splitFR}) and the factorization is complete, which indicates mutual independence, concluding the proof. 
\end{proof}

\paragraph*{S3 Appendix.}
\label{S3_Appendix}
\begin{proof}[Proof of Lemma~\ref{lemma:TDist}]
  Again, we prove Lemma~\ref{lemma:TDist} by induction on $l$. 
  \begin{itemize}
    \item \textbf{Base Case} $l=1$\\
    $T^1_i$ can only be freshly generated, meaning that $t^1_i=t^0=0$. Eq~(\ref{eqn:TDistLong}) follows from 
    \[\Pr\{T^1_i\leq t\cond\traj{l-1}\} = \Psi_i(t).\] 
    \item \textbf{Inductive Case} $l\geq 2$\\
    Assume that $\forall l'<l$, $\Pr\{T^{l'}_i\leq t\cond\traj{l-1}\}$ takes the form of Eq~(\ref{eqn:TDistLong}).
    \begin{itemize}
      \item If $T^l_i$ is freshly generated, then $t^l_i = t^{l-1}$, meaning that $\Psi_i(t^{l-1}-t^l_i) = 0$. As a result,
      \begin{align*}
        \Pr\{T^l_i\leq t\cond\traj{l-1}\} 
        = \Psi_i(t-t^{l-1}) = \frac{\Psi_i(t-t^l_i) - \Psi_i(t^{l-1}-t^l_i)}{1 - \Psi_i(t^{l-1} - t^l_i)}.
      \end{align*}
      \item Otherwise, if $T^l_i$ is not freshly generated, meaning that event emitter $i$ already exists in iteration $l-1$ (hence $t^{l-1}_i=t^l_i$), and we can let $s\coloneqq t^{l-1}_i=t^l_i$.
      Because of the inductive hypothesis, $T^{l-1}_i$ satisfies
      \[\Pr\{T^{l-1}_i\leq t\cond\traj{l-2}\} = \frac{\Psi_i(t-s) - \Psi_i(t^{l-2}-s)}{1 - \Psi_i(t^{l-2} - s)}.\]
      Thanks to Corollary~\ref{corollary:inductOne},
      \begin{align*}
        \Pr\{T^l_i\leq t\cond\traj{l-1}\} 
        =& \Pr\{T^{l-1}_i\leq t\cond T^{l-1}_i > t^{l-1}, \traj{l-2}\}\\
        =& \frac{\Pr\{T^{l-1}_i \leq t\cond\traj{l-2}\} - \Pr\{T^{l-1}_i\leq t^{l-1}\cond\traj{l-2}\}}{1 - \Pr\{T^{l-1}_i \leq t^{l-1}\cond\traj{l-2}\}}\\
        =& \frac{\Psi_i(t-s)-\Psi_i(t^{l-1}-s)}{1-\Psi_i(t^{l-1}-s)},
      \end{align*} 
      which is exactly what we need.
    \end{itemize} 
  \end{itemize}
\end{proof}

\paragraph*{S4 Appendix.}
\label{S4_Appendix}
\begin{proof}[Proof of Theorem \ref{thm:equiv}]
  Given the previous trajectory up to $l-1$, consider the tentative times $\{T^l_i\}_{i\in E^l}$ of NRM. We have that
  \begin{itemize}
    \item $\{T^l_i\}_{i\in E^l}$ are mutually independent given $\traj{l-1}$ (Theorem \ref{thm:indep}).
    \item $\Pr\{T^l_i\leq t\cond\traj{l-1}\} = \Psi_i(t-t^{l-1}\cond t^{l-1}-t^l_i)$ (Eq~(\ref{eqn:tdist})).
  \end{itemize}
  For NRM, the time of the $l$-th execution is $T^l\coloneqq\min_i T^l_i$. We obtain
  \begin{align*}
    \Pr\{T^l > t\cond\traj{l-1}\}
  =&\Pr\{\min_i T^l_i > t\cond\traj{l-1}\} = \Pr\{T^l_{E^l} > t\cond\traj{l-1}\}\\
  =& \prod_i \Pr\{T^l_i > t\cond\traj{l-1}\} 
  = \prod_i \left[1 - \Psi_i(t-t^{l-1}\cond t^{l-1} - t^l_i)\right].
  \end{align*}
  Meanwhile for nMGA, the time of the $l$-th execution $\tilde T^l\coloneqq t^{l-1} + \Delta^l$ satisfies
  \begin{align*}
    \Pr\{\tilde T^l > t\cond\traj{l-1}\} 
  =& \Pr\{\Delta^l + t^{l-1} > t\cond\traj{l-1}\} = \Pr\{\Delta^l > t-t^{l-1}\cond\traj{l-1}\} \\
  =& \Phi^l(t-t^{l-1}) = \prod_i \left[1-\Psi_i(t-t^{l-1}\cond t^{l-1}-t^l_i)\right] \\
  =& \Pr\{T^l > t\cond\traj{l-1}\},
  \end{align*}
  and the fact that $\forall t, \Pr\{T^l > t\cond\traj{l-1}\}=\Pr\{\tilde T^l > t\cond\traj{l-1}\}$ implies $f_{T^l} \equiv \tilde f_{T^l}$. We see that the times of execution, $T^l$ and $\tilde T^l$, do have the same (marginal) distribution. We then investigate the distribution of $K^l$ given trajectory $\traj{l-1}$ and $T^l=t^l$ in NRM. 
  \begin{align*}
    \Pi^l(k^l\cond t^l)
  \coloneqq& \Pr\{K^l = k^l\cond T^l =t^l, \traj{l-1}\}\\
  =& \lim_{\epsilon\to 0}\frac{\Pr\{K^l=k^l, t^l \leq T^l < t^l + \epsilon\cond\traj{l-1}\}}{\Pr\{t^l \leq T^l < t^l + \epsilon\cond\traj{l-1}\}}\\
  =& \lim_{\epsilon\to 0}\frac{\Pr\{K^l=k^l, t^l \leq T^l < t^l + \epsilon\cond\traj{l-1}\}}{\sum_{k} \Pr\{K^l=k, t^l \leq T^l < t^l + \epsilon\cond\traj{l-1}\}},
  \end{align*}
  where each individual $\Pr\{K^l=k, t^l \leq T^l < t^l + \epsilon\cond\traj{l-1}\}$ can be expanded into 
  \begin{align}
    &\Pr\{K^l=k, t^l \leq T^l < t^l + \epsilon\cond\traj{l-1}\} = \Pr\{t^l \leq T^l_k < t^l + \epsilon;\; \forall j\neq k, T^l_j > t^l\cond\traj{l-1}\} \nonumber\\
  =& \psi_k(t^l - t^{l-1}\cond t^{l-1} - t^l_k) \cdot \epsilon \cdot \prod_{j\neq k} \left[1 - \Psi_j(t^l-t^{l-1}\cond t^{l-1} - t^l_j)\right]  \quad (\text{Thm.}~\ref{thm:indep}, \text{Lemma}~\ref{lemma:TDist}) \nonumber\\
  =& \frac{\psi_k(t^l - t^{l-1}\cond t^{l-1} - t^l_k)}{1-\Psi_k(t^l-t^{l-1}\cond t^{l-1}-t^l_k)} \cdot \epsilon \cdot \prod_j \left[1 - \Psi_j(t^l-t^{l-1}\cond t^{l-1} - t^l_j)\right] \nonumber\\
  =& \frac{\psi_k(t^l - t^l_k)}{1 - \Psi_k(t^l-t^l_k)} \cdot \epsilon \cdot \prod_j \left[1 - \Psi_j(t^l-t^{l-1}\cond t^{l-1} - t^l_j)\right] \nonumber\\
  =& \psi_k(0\cond t^l-t^l_k) \cdot \epsilon \cdot \underbrace{\prod_j \left[1 - \Psi_j(t^l-t^{l-1}\cond t^{l-1} - t^l_j)\right]}_{\text{not dependent on } k}\label{eqn:notdeponk}.
  \end{align}
  Since both $\epsilon$ and the factor not dependent on $k$ in Eq~(\ref{eqn:notdeponk}) cancel out, for NRM, we have
  \begin{align*}
    \Pi^l(k^l) = \frac{\psi_{k^l}(0\cond t^l-t^l_{k^l})}{\sum_k\psi_k(0\cond t^l-t^l_k)},
  \end{align*}
  which is the same as the PMF weights stipulated by nMGA in Alg.~\ref{alg:nMGAExact} and Eq~(\ref{eqn:nmgaPMF}). 

  In conclusion, 
  \begin{align*}
    \begin{rcases}
      f_{T^l}\equiv \tilde f_{T^l}&\\
      \Pi^l\equiv \tilde\Pi^l&
    \end{rcases} \implies f^{\text{NRM}}_{T^l,K^l}(\cdot\cond\traj{l-1}) \equiv f^{\text{nMGA}}_{T^l,K^l}(\cdot\cond\traj{l-1}),
  \end{align*}
  which indicates equivalence, concluding the proof. 
\end{proof}




\bibliographystyle{unsrtnat}
\bibliography{references}  

\begin{thebibliography}{22}
\providecommand{\natexlab}[1]{#1}
\providecommand{\url}[1]{\texttt{#1}}
\expandafter\ifx\csname urlstyle\endcsname\relax
  \providecommand{\doi}[1]{doi: #1}\else
  \providecommand{\doi}{doi: \begingroup \urlstyle{rm}\Url}\fi

\bibitem[Ferguson et~al.(2020)Ferguson, Laydon, Nedjati-Gilani, Imai, Ainslie,
  Baguelin, Bhatia, Boonyasiri, Cucunub{\'a}, Cuomo-Dannenburg,
  et~al.]{ferguson2020impact}
Neil~M Ferguson, Daniel Laydon, Gemma Nedjati-Gilani, Natsuko Imai, Kylie
  Ainslie, Marc Baguelin, Sangeeta Bhatia, Adhiratha Boonyasiri, Zulma
  Cucunub{\'a}, Gina Cuomo-Dannenburg, et~al.
\newblock Impact of non-pharmaceutical interventions (npis) to reduce covid-19
  mortality and healthcare demand. imperial college covid-19 response team.
\newblock \emph{Imperial College COVID-19 Response Team}, page~20, 2020.

\bibitem[Kiss et~al.(2017)Kiss, Miller, Simon, et~al.]{kiss2017mathematics}
Istv{\'a}n~Z Kiss, Joel~C Miller, P{\'e}ter~L Simon, et~al.
\newblock Mathematics of epidemics on networks.
\newblock \emph{Cham: Springer}, 598, 2017.

\bibitem[Kermack and McKendrick(1927)]{kermack1927contribution}
William~Ogilvy Kermack and Anderson~G McKendrick.
\newblock A contribution to the mathematical theory of epidemics.
\newblock \emph{Proceedings of the royal society of london. Series A,
  Containing papers of a mathematical and physical character}, 115\penalty0
  (772):\penalty0 700--721, 1927.

\bibitem[Lloyd(2001)]{Lloyd2001}
Alun~L. Lloyd.
\newblock {Realistic distributions of infectious periods in epidemic models:
  Changing patterns of persistence and dynamics}.
\newblock \emph{Theoretical Population Biology}, 60\penalty0 (1):\penalty0
  59--71, 2001.
\newblock ISSN 00405809.
\newblock \doi{10.1006/tpbi.2001.1525}.

\bibitem[Lu et~al.(2020)Lu, Zhang, Liu, Zhang, Jalali, Zhang, Li, Zhao, Song,
  Zhao, et~al.]{lu2020epidemiological}
Qing-Bin Lu, Yong Zhang, Ming-Jin Liu, Hai-Yang Zhang, Neda Jalali, An-Ran
  Zhang, Jia-Chen Li, Han Zhao, Qian-Qian Song, Tian-Shuo Zhao, et~al.
\newblock Epidemiological parameters of covid-19 and its implication for
  infectivity among patients in china, 1 january to 11 february 2020.
\newblock \emph{Eurosurveillance}, 25\penalty0 (40):\penalty0 2000250, 2020.

\bibitem[Salathe et~al.(2012)Salathe, Bengtsson, Bodnar, Brewer, Brownstein,
  Buckee, Campbell, Cattuto, Khandelwal, Mabry, et~al.]{salathe2012digital}
Marcel Salathe, Linus Bengtsson, Todd~J Bodnar, Devon~D Brewer, John~S
  Brownstein, Caroline Buckee, Ellsworth~M Campbell, Ciro Cattuto, Shashank
  Khandelwal, Patricia~L Mabry, et~al.
\newblock Digital epidemiology.
\newblock \emph{PLoS Comput Biol}, 8\penalty0 (7):\penalty0 e1002616, 2012.

\bibitem[Ferretti et~al.(2020)Ferretti, Wymant, Kendall, Zhao, Nurtay,
  Abeler-D{\"o}rner, Parker, Bonsall, and Fraser]{ferretti2020quantifying}
Luca Ferretti, Chris Wymant, Michelle Kendall, Lele Zhao, Anel Nurtay, Lucie
  Abeler-D{\"o}rner, Michael Parker, David Bonsall, and Christophe Fraser.
\newblock Quantifying sars-cov-2 transmission suggests epidemic control with
  digital contact tracing.
\newblock \emph{Science}, 368\penalty0 (6491), 2020.

\bibitem[Gillespie(1976)]{gillespie1976general}
Daniel~T Gillespie.
\newblock A general method for numerically simulating the stochastic time
  evolution of coupled chemical reactions.
\newblock \emph{Journal of computational physics}, 22\penalty0 (4):\penalty0
  403--434, 1976.

\bibitem[Gibson and Bruck(2000)]{Gibson2000}
Michael~A. Gibson and Jehoshua Bruck.
\newblock {Efficient exact stochastic simulation of chemical systems with many
  species and many channels}.
\newblock \emph{Journal of Physical Chemistry A}, 104\penalty0 (9):\penalty0
  1876--1889, 2000.
\newblock ISSN 10895639.
\newblock \doi{10.1021/jp993732q}.

\bibitem[Bogu{\~{n}}{\'{a}} et~al.(2014)Bogu{\~{n}}{\'{a}}, Lafuerza, Toral,
  and Serrano]{Boguna2014}
Marian Bogu{\~{n}}{\'{a}}, Luis~F. Lafuerza, Ra{\'{u}}l Toral, and
  M.~{\'{A}}ngeles Serrano.
\newblock {Simulating non-Markovian stochastic processes}.
\newblock \emph{Physical Review E - Statistical, Nonlinear, and Soft Matter
  Physics}, 90\penalty0 (4):\penalty0 1--9, 2014.
\newblock ISSN 15502376.
\newblock \doi{10.1103/PhysRevE.90.042108}.

\bibitem[Vestergaard and G{\'{e}}nois(2015)]{Vestergaard2015}
Christian~L. Vestergaard and Mathieu G{\'{e}}nois.
\newblock {Temporal Gillespie Algorithm: Fast Simulation of Contagion Processes
  on Time-Varying Networks}.
\newblock \emph{PLoS Computational Biology}, 11\penalty0 (10):\penalty0 1--28,
  2015.
\newblock ISSN 15537358.
\newblock \doi{10.1371/journal.pcbi.1004579}.

\bibitem[Brent(2013)]{brent2013algorithms}
Richard~P Brent.
\newblock \emph{Algorithms for minimization without derivatives}.
\newblock Courier Corporation, 2013.

\bibitem[Virtanen et~al.(2020)Virtanen, Gommers, Oliphant, Haberland, Reddy,
  Cournapeau, Burovski, Peterson, Weckesser, Bright, {van der Walt}, Brett,
  Wilson, Millman, Mayorov, Nelson, Jones, Kern, Larson, Carey, Polat, Feng,
  Moore, {VanderPlas}, Laxalde, Perktold, Cimrman, Henriksen, Quintero, Harris,
  Archibald, Ribeiro, Pedregosa, {van Mulbregt}, and {SciPy 1.0
  Contributors}]{2020SciPy-NMeth}
Pauli Virtanen, Ralf Gommers, Travis~E. Oliphant, Matt Haberland, Tyler Reddy,
  David Cournapeau, Evgeni Burovski, Pearu Peterson, Warren Weckesser, Jonathan
  Bright, St{\'e}fan~J. {van der Walt}, Matthew Brett, Joshua Wilson, K.~Jarrod
  Millman, Nikolay Mayorov, Andrew R.~J. Nelson, Eric Jones, Robert Kern, Eric
  Larson, C~J Carey, {\.I}lhan Polat, Yu~Feng, Eric~W. Moore, Jake
  {VanderPlas}, Denis Laxalde, Josef Perktold, Robert Cimrman, Ian Henriksen,
  E.~A. Quintero, Charles~R. Harris, Anne~M. Archibald, Ant{\^o}nio~H. Ribeiro,
  Fabian Pedregosa, Paul {van Mulbregt}, and {SciPy 1.0 Contributors}.
\newblock {{SciPy} 1.0: Fundamental Algorithms for Scientific Computing in
  Python}.
\newblock \emph{Nature Methods}, 17:\penalty0 261--272, 2020.
\newblock \doi{10.1038/s41592-019-0686-2}.

\bibitem[Ferreira et~al.(2012)Ferreira, Castellano, and
  Pastor-Satorras]{ferreira2012epidemic}
Silvio~C Ferreira, Claudio Castellano, and Romualdo Pastor-Satorras.
\newblock Epidemic thresholds of the susceptible-infected-susceptible model on
  networks: A comparison of numerical and theoretical results.
\newblock \emph{Physical Review E}, 86\penalty0 (4):\penalty0 041125, 2012.

\bibitem[Qu and Wang(2017)]{qu2017sis}
Bo~Qu and Huijuan Wang.
\newblock Sis epidemic spreading with heterogeneous infection rates.
\newblock \emph{IEEE Transactions on Network Science and Engineering},
  4\penalty0 (3):\penalty0 177--186, 2017.

\bibitem[Ferguson et~al.(2005)Ferguson, Cummings, Cauchemez, Fraser, Riley,
  Meeyai, Iamsirithaworn, and Burke]{ferguson2005strategies}
Neil~M Ferguson, Derek~AT Cummings, Simon Cauchemez, Christophe Fraser, Steven
  Riley, Aronrag Meeyai, Sopon Iamsirithaworn, and Donald~S Burke.
\newblock Strategies for containing an emerging influenza pandemic in southeast
  asia.
\newblock \emph{Nature}, 437\penalty0 (7056):\penalty0 209--214, 2005.

\bibitem[Buono et~al.(2013)Buono, Vazquez, Macri, and
  Braunstein]{buono2013slow}
Camila Buono, Federico Vazquez, Pablo~Alejandro Macri, and LA~Braunstein.
\newblock Slow epidemic extinction in populations with heterogeneous infection
  rates.
\newblock \emph{Physical Review E}, 88\penalty0 (2):\penalty0 022813, 2013.

\bibitem[Erd{\H{o}}s and R{\'e}nyi(1960)]{erdHos1960evolution}
Paul Erd{\H{o}}s and Alfr{\'e}d R{\'e}nyi.
\newblock On the evolution of random graphs.
\newblock \emph{Publ. Math. Inst. Hung. Acad. Sci}, 5\penalty0 (1):\penalty0
  17--60, 1960.

\bibitem[Gross et~al.(2006)Gross, D’Lima, and Blasius]{gross2006epidemic}
Thilo Gross, Carlos J~Dommar D’Lima, and Bernd Blasius.
\newblock Epidemic dynamics on an adaptive network.
\newblock \emph{Physical review letters}, 96\penalty0 (20):\penalty0 208701,
  2006.

\bibitem[Wang et~al.(2019)Wang, Liu, Liang, Hu, and Zhou]{wang2019coevolution}
Wei Wang, Quan-Hui Liu, Junhao Liang, Yanqing Hu, and Tao Zhou.
\newblock Coevolution spreading in complex networks.
\newblock \emph{Physics Reports}, 820:\penalty0 1--51, 2019.

\bibitem[Gross and Blasius(2008)]{gross2008adaptive}
Thilo Gross and Bernd Blasius.
\newblock Adaptive coevolutionary networks: a review.
\newblock \emph{Journal of the Royal Society Interface}, 5\penalty0
  (20):\penalty0 259--271, 2008.

\bibitem[Strogatz(2018)]{strogatz2018nonlinear}
Steven~H Strogatz.
\newblock \emph{Nonlinear dynamics and chaos: with applications to physics,
  biology, chemistry, and engineering}.
\newblock CRC press, 2018.

\end{thebibliography}






\end{document}